\DeclareFontShape{OT1}{cmr}{bx}{sc}{<-> cmbcsc10}{}
\definecolor{darkgray}{rgb}{0.33, 0.33, 0.33}
\definecolor{lightgray}{rgb}{0.6, 0.6, 0.6}
\definecolor{myred}{RGB}{255,15,0}
\definecolor{ao(english)}{rgb}{0.0, 0.5, 0.0} % this is a darkish green
\definecolor{itemizecol}{named}{darkgray}
\definecolor{enumcol}{named}{darkgray}
\definecolor{desccol}{named}{darkgray}
\crefname{prop}{Proposition}{Propositions}
\crefname{lem}{Lemma}{Lemmas}
\crefname{ex}{Example}{Examples}
\crefname{item}{Item}{Items}
\tikzset{shorten >=1pt, >=stealth, auto, node distance=6em, initial text=}
\def\th@plain{%
  \thm@headfont{\bfseries\sffamily}
  \thm@notefont{\normalfont\sffamily}% same as heading font
  \itshape % body font
}
\def\th@definition{%
  \thm@headfont{\bfseries\sffamily}
  \thm@notefont{\normalfont\sffamily}% same as heading font
  \normalfont % body font
}
\theoremstyle{plain}
\newtheorem{theorem}{Theorem}[section] % IMS numbering
\newtheorem{lemma}[theorem]{Lemma}
\newtheorem{corollary}[theorem]{Corollary}
\newtheorem{proposition}[theorem]{Proposition}
\theoremstyle{definition}
\newtheorem{example}[theorem]{Example}
\theoremstyle{remark}
\newcommand{\myquot}[1]{``#1''}
\newcommand{\nats}{\mathbbm{N}}
\newcommand{\size}[1]{|#1|}
\renewcommand{\epsilon}{\varepsilon}
\renewcommand{\phi}{\varphi}
\newcommand{\set}[1]{\{#1\}}
\newcommand{\pow}[1]{2^{#1}}
\newcommand{\aut}{\mathfrak{A}}
\newcommand{\autp}{\mathfrak{P}}
\newcommand{\TM}{\mathcal M}
\newcommand{\acc}{\mathrm{Acc}}
\newcommand{\col}{\Omega}
\newcommand{\infi}[0]{\mathrm{Inf}}
\newcommand{\occ}[0]{\mathrm{Occ}}
\newcommand{\curlyF}[0]{\mathcal{F}}
\newcommand{\I}{\mathcal{I}}
\newcommand{\initmark}{I}
\newcommand{\formula}{\ensuremath{\mathsf{Emerson}}-\ensuremath{\mathsf{Lei}}}
\newcommand{\explicit}{\ensuremath{\mathsf{explicit}}}
\newcommand{\wDMA}{\ensuremath{\mathsf{wDMA}}\xspace}
\newcommand{\wNMA}{\ensuremath{\mathsf{wNMA}}\xspace}
\newcommand{\DPA}{\ensuremath{\mathsf{DPA}}\xspace}
\newcommand{\NPA}{\ensuremath{\mathsf{NPA}}\xspace}
\newcommand{\delaygame}[1]{\Gamma\!_{f}(#1)}
\newcommand{\delaygamep}[1]{\Gamma\!_{g}(#1)}
\newcommand{\SigmaI}{\Sigma_I}
\newcommand{\SigmaO}{\Sigma_O}
\newcommand{\stratO}{\tau_O}
\newcommand{\stratI}{\tau_I}
\newcommand{\p}{P}
\newcommand{\bigo}{\mathcal{O}}
\newcommand{\ptime}{{\upshape{\textsc{P}}}\xspace}
\newcommand{\np}{{\upshape{\textsc{NP}}}\xspace}
\newcommand{\conp}{{\upshape{\textsc{co-NP}}}\xspace}
\newcommand{\pspace}{{\upshape{\textsc{PSpace}}}}
\newcommand{\exptime}{{\upshape{\textsc{ExpTime}}}\xspace}
\newcommand{\twoexp}{{\upshape{\textsc{2ExpTime}}}\xspace}
\newcommand{\threeexp}{{\upshape{\textsc{3ExpTime}}}\xspace}
\newcommand{\atwoexpspace}{\upshape{\textsc{A2ExpSpace}}\xspace}
\newcommand{\aexpspace}{\upshape{\textsc{AExpSpace}}\xspace}
\newcommand{\bool}{\mathbb{B}}
\newcommand{\nodelbl}[2]{ \ensuremath{{\scriptstyle b_{#1} = #2}} }
\newcommand{\zero}[2]{\mathit{S}(\nodelbl{#1}{#2})}
\newcommand{\one}[2]{\mathit{M}_1(\nodelbl{#1}{#2})}
\newcommand{\two}[2]{\mathit{M}_2(\nodelbl{#1}{#2})}
\newcommand{\res}[3]{ \ensuremath{{\scriptstyle \begin{smallmatrix}
  \mathrm{pos} = {#1}\\
  \mathrm{res} = #2\\
  \mathrm{carry} = #3
\end{smallmatrix}}}}
\newcommand{\add}[3]{\mathit{A}\left(\res{#1}{#2}{#3}\right)}
\newcommand{\addi}[2]{\mathit{A}(\nodelbl{#1}{#2})}
\newcommand{\addequal}[2]{\mathit{A}_{=}(\nodelbl{#1}{#2})}
\newcommand{\addguess}[2]{\mathit{G}(\nodelbl{#1}{#2})}
\newcommand{\yes}{\text{\ding{53}}}%
\newcommand{\no}{\text{\ding{107}}}%
\newcommand{\noo}{\text{\ding{54}}}%
\newcommand{\ok}{\text{\ding{51}}}%
\newcommand{\nok}{\text{\ding{55}}}%
\title{Weak Muller Conditions\\Make Delay Games Hard}
\author{Sarah Winter}  
\affil{Université libre de Bruxelles, Brussels, Belgium\\
\texttt{sarah.winter@ulb.ac.be}}
\author{Martin Zimmermann}
\affil{Aalborg University, Aalborg, Denmark\\
\texttt{mzi@cs.aau.dk}}
\date{}
\begin{document}

\maketitle

%%%%%%%%%%%%%%%%% IMS %%%%%%%%%%%%%%%%%%
% %For final pagination than call for the page counter. 
% %Example here only:-
% \setcounter{page}{1}  %Continued chapter start with odd page

% \chapter{WEAK MULLER CONDITIONS\\ MAKE DELAY GAMES HARD}

% \markboth{S.~Winter \& M.~Zimmermann}{Weak Muller Conditions Make Delay Games Hard} 

% \author{Sarah Winter\footnote{This work is supported by the MIS project F451019F (F.R.S.-FNRS). Sarah Winter is a postdoctoral researcher at F.R.S.-FNRS.}}  
% \address{
% Université libre de Bruxelles\\
% Brussels, Belgium\\
% sarah.winter@ulb.ac.be} 

% \author{Martin Zimmermann\footnote{This work is supported by DIREC – Digital Research Centre Denmark.}}  
% \address{
% Aalborg University\\
% Aalborg, Denmark\\
% mzi@cs.aau.dk } 
%%%%%%%%%%%%%%%%% IMS %%%%%%%%%%%%%%%%%%

\begin{abstract}
We show that solving delay games with winning conditions given by deterministic and non-deterministic weak Muller automata is 2EXPTIME-complete respectively 3EXPTIME-complete. 
Furthermore, doubly- and triply-exponential lookahead is necessary and sufficient to win such games.
These results are the first that show that the succinctness of the automata types used to specify the winning conditions has an influence on the complexity of these problems.
\end{abstract}

\section{Introduction}
\label{sec:intro}

Is solving delay games with Muller conditions, i.e., determining its winner, harder than solving delay games with parity conditions? 
Is more lookahead required to win a delay game with a Muller condition than to win a delay game with a parity condition?
Deterministic Muller automata are exponentially more succinct than deterministic parity automata~\cite{DBLP:conf/csl/Boker17}\footnote{Strictly speaking this is only true when the size of automata is measured in the number of states, which is a very crude measure for Muller automata. See the discussion about the representation of Muller conditions later in the introduction and also the works of Boker~\cite{DBLP:conf/csl/Boker17} and Hunter and Dawar~\cite{DBLP:conf/mfcs/HunterD05,HunterPhD}.} and solving classical, delay-free Muller games is \pspace-complete~\cite{DBLP:conf/mfcs/HunterD05} while solving parity games is quasi-polynomial~\cite{DBLP:conf/stoc/CaludeJKL017}.
So, the answer to both questions should be yes, but surprisingly, and frustratingly, both questions are open. 
Here, we take a step towards resolving them by showing that the answer is indeed yes if we consider weak Muller conditions.

The games we consider here are Gale-Stewart games~\cite{GaleStewart53}, arguably the simplest form of two-player games.
In such a game, two players, called $I$ and $O$, alternatingly pick letters, thereby constructing an infinite word~$\alpha$.
The winner of such a play is determined by the winning condition of the game, a language of infinite words.
Player~$O$ wins if $\alpha$ is in the winning condition, otherwise Player~$I$ wins.

Delay games were introduced by Hosch and Landweber~\cite{DBLP:conf/icalp/HoschL72} only three years after the seminal Büchi-Landweber Theorem~\cite{BuechiLandweber69} showing how to determine the winner of Gale-Stewart games with $\omega$-regular winning conditions.
Hosch and Landweber generalized the setting of Gale-Stewart games by allowing Player~$O$ to delay her moves to obtain a lookahead on Player~$I$'s moves. 
This advantage allows her to win games she cannot win without lookahead.
These games capture the asynchronous interaction of two agents~\cite{DBLP:journals/acta/ChenFLMZ21}, the deterministic uniformization problem for relations over infinite words~\cite{carayol:hal-01806575,HKT12}, and are related to streaming transducers with delay~\cite{DBLP:conf/fsttcs/FiliotW21}.

In this work, we are interested in games with $\omega$-regular winning conditions.
These are typically represented by (deterministic or non-deterministic) $\omega$-automata with various kinds of acceptance conditions, e.g., safety, parity or Muller.
The choice between determinism and non-determinism and succinctness of the acceptance condition can have an influence on the complexity of solving the game.

Hosch and Landweber showed that it is decidable whether Player~$O$ wins a given delay game with bounded lookahead~\cite{DBLP:conf/icalp/HoschL72}. 
Forty years later, Holtmann, Kaiser, and Thomas~\cite{HKT12} revisited delay games and proved that if Player~$O$ wins a delay game then she wins it already with doubly-exponential lookahead (in the size of a given deterministic parity automaton recognizing the winning condition).
Thus, unbounded lookahead does not offer any advantage over doubly-exponential lookahead in games with $\omega$-regular winning conditions.
Furthermore, they showed that the winner of a delay game, again with its winning condition given by a deterministic parity automaton, can be determined in doubly-exponential time.

Both upper bounds were improved and matching lower bounds were proven by Klein and Zimmermann: Solving delay games is \exptime-complete and exponential lookahead is both necessary to win some games and sufficient to win all games that can be won~\cite{DBLP:journals/lmcs/KleinZ14}.
Both lower bounds already hold for winning conditions specified by deterministic safety automata while the upper bounds hold for deterministic parity automata.
They also considered reachability automata, for which there is no difference between the results for deterministic and non-deterministic automata. 

For non-deterministic parity automata, solving delay games is \mbox{\twoexp}-complete and doubly-ex\-po\-nen\-tial lookahead is both necessary and sufficient~\cite{DBLP:conf/fsttcs/KleinZ16}.
As before, both lower bounds already hold for non-deterministic safety automata while the upper bounds hold for non-deterministic parity automata.
See \cref{tab:complexity} and \cref{tab:lookahead} for an overview over the known results (in gray).

\setlength{\tabcolsep}{15pt}

\begin{table}[t]
\caption{The complexity of solving delay games. Results for reachability, safety, and parity (in gray) are from~\protect\cite{DBLP:journals/lmcs/KleinZ14,DBLP:conf/fsttcs/KleinZ16}; results for weak Muller are proven here.}
%% \tbl is \caption, centering or center environment breaks everything, curly braces around tabular necessary. \protect necessary for cite in \tbl
{
\begin{center}
\begin{tabular}{lll}
\toprule
& deterministic &  non-deterministic\\
 \midrule
\rowcolor{lightgray!40} Reachability &\pspace-complete &\pspace-complete \\
\rowcolor{lightgray!40} Safety &\exptime-complete &\twoexp-complete \\
\rowcolor{lightgray!40} Parity &\exptime-complete &\twoexp-complete \\
 weak Muller &\twoexp-complete &\threeexp-complete\\
\bottomrule
\end{tabular}
\end{center}
}
\label{tab:complexity}
\end{table}

\begin{table}[t]
\caption{The lookahead required to win delay games. All bounds are asymptotically tight. Results for reachability, safety, and parity (in gray) are from~\protect\cite{DBLP:journals/lmcs/KleinZ14,DBLP:conf/fsttcs/KleinZ16}; results for weak Muller are proven here.}
%% \tbl is \caption, centering or center environment breaks everything, curly braces around tabular necessary. \protect necessary for cite in \tbl
{
\begin{center}
\begin{tabular}{lll}
\toprule
& deterministic &  non-deterministic\\
 \midrule
\rowcolor{lightgray!40}  Reachability & exponential & doubly-exponential \\
\rowcolor{lightgray!40} Safety & exponential & doubly-exponential \\
\rowcolor{lightgray!40} Parity & exponential & doubly-exponential \\
weak Muller & doubly-exponential & triply-exponential \\
\bottomrule
\end{tabular}
\end{center}
}
\label{tab:lookahead}
\end{table}

However, the parity condition is not very succinct, e.g., deterministic Rabin, Streett, and Muller automata can all be exponentially more succinct than deterministic parity automata while they all recognize exactly the $\omega$-regular languages.

Here, we need to discuss briefly how to measure the size of an $\omega$-automaton, say with set~$Q$ of states.
For all acceptance conditions induced by a subset~$F\subseteq Q$ (e.g., reachability, safety, Büchi, and co-Büchi) or by a coloring~$\col \colon Q \rightarrow \nats$ (e.g., parity) the size of the automaton is captured by its number of states, as the acceptance condition can be encoded with a polynomial overhead.
However, for more succinct automata, the situation is different. 
Rabin and Streett conditions are given by a finite family~$(R_i,G_i)_{i\in\I}$ of pairs of subsets~$R_i, G_i \subseteq Q$.
Here, $\size{\I}$ is called the index of the automaton and has to be taken into account when measuring its size.

Muller acceptance on the other hand is based on a set~$\curlyF \subseteq 2^{Q}$, which can be encoded in various representations.
The simplest one is just listing all subsets in $\curlyF$, the so-called explicit representation whose size is $\size{\curlyF}$.
In this work we are mainly focused on representing~$\curlyF$ by a Boolean formula with variables in $Q$ so that the models of the formula are exactly the sets in $\curlyF$.
In this case, we measure the size of the acceptance condition by the size of the formula.
Other representations include the use of circuits, trees, DAGs, and colorings.
The relative succinctness of these representations and their influence on the complexity of solving arena-based (i.e., delay-free) games has been studied by Hunter and Dawar~\cite{DBLP:conf/mfcs/HunterD05} and Horn~\cite{DBLP:conf/fsttcs/Horn08}.

In the setting of arena-based games, parity games can be solved in quasi-polynomial time~\cite{DBLP:conf/stoc/CaludeJKL017} while solving Rabin games is \np-complete~\cite{DBLP:journals/siamcomp/EmersonJ99}, solving Streett games is \conp-complete (they are dual to Rabin games), and solving Muller games can be \ptime-complete, \conp-complete, or \pspace-complete, depending on the representation of $\curlyF$~\cite{DBLP:conf/mfcs/HunterD05,DBLP:conf/fsttcs/Horn08}.

Thus, it is natural to ask whether winning conditions given by more succinct automata also make delay games harder to solve and increase the bounds on the necessary lookahead for Player~$O$.
However, no such results are known. 
Note that it is straightforward to obtain doubly-exponential (triply-exponential) upper bounds on the complexity and on the lookahead, as every deterministic (non-deterministic) Rabin, Street, or Muller automaton can be turned into an equivalent exponentially larger deterministic (non-deterministic) parity automaton. 
This result is independent of the size and encoding of the acceptance condition of the automaton that is transformed.

Here, we are interested in Muller conditions, as they are the most succinct ones. 
As a first step towards answering our motivating questions, we show that for \emph{weak} Muller conditions, there is indeed an exponential increase, both in the solution complexity and in the necessary lookahead.
Thus, we provide matching lower bounds to the upper bounds obtained by transforming (weak) Muller automata into parity automata.

Recall that a Muller condition~$\curlyF \subseteq 2^Q$ represents all those runs~$\rho \in Q^\omega$ whose infinity set, the set of states visited infinitely often by $\rho$, is in $\curlyF$. 
In contrast, a weak Muller condition~$\curlyF \subseteq 2^Q$ represents all those runs~$\rho$ whose occurrence set, the set of states visited by $\rho$, is in $\curlyF$. 
Note that weak Muller automata are strictly less expressive than (standard) Muller automata.

In this paper we settle the complexity of solving delay games with deterministic and non-deterministic weak Muller winning conditions and provide tight bounds on the necessary lookahead to win such games.
More precisely, we show that solving delay games with deterministic weak Muller winning conditions is \twoexp-complete and solving delay games with non-deterministic weak Muller winning conditions is \threeexp-complete. 
Similarly, we show that doubly-exponential lookahead is necessary and sufficient to win delay games with deterministic weak Muller conditions and triply-exponential lookahead is necessary and sufficient to win delay games with non-deterministic weak Muller conditions.
All these results hold for $\curlyF$ being represented by a Boolean formula. 
The lower bounds rely on the fact that in a weak Muller automaton, every visit to a state is relevant for acceptance. 
We show that this property allows to construct a small deterministic (non-deterministic) automaton that requires the players in a game to implement a cyclic counter with exponentially (doubly-exponentially) many values. 
This automaton is then used to construct games in which Player~$O$ needs doubly-exponential (triply-exponential) lookahead and to construct games that simulate \aexpspace (\atwoexpspace) Turing machines.
Finally, we also give an exponential lower bound on the necessary lookahead in delay games with deterministic explicit weak Muller conditions. 

We conclude this paper by discussing the obstacles one faces when trying to take the next step, i.e., to raise the lower bounds for (standard) Muller conditions.
There, we also return to the discussion of the different representations of Muller conditions, i.e., we discuss whether our lower bounds, which hold for the representation by Boolean formulas, can be lifted to less succinct representations, thereby giving stronger results. 

\section{Preliminaries}
\label{sec:prelims}

We introduce our notation.
We denote the non-negative integers by~$\nats$, and the set $\set{0,1}$ by~$\bool$.
The size of a set $X$ is denoted by $\size{X}$.

%%%%%%%%%%%%%%%%%%%%%%
\paragraph{Words, languages.}

An \emph{alphabet}~$\Sigma$ is a non-empty finite set of \emph{letters} or \emph{symbols}.
A \emph{word} over $\Sigma$ is a sequence of letters of $\Sigma$.
The set of finite resp.\ infinite words over $\Sigma$ is denoted by $\Sigma^*$ resp.\ $\Sigma^\omega$.
The set of non-empty finite words over $\Sigma$ is denoted by $\Sigma^+$.
The \emph{empty word} is denoted by $\varepsilon$.
Given a finite or infinite word $\alpha$, we denote by $\alpha(i)$ the $i$-th letter of~$\alpha$, starting with $0$, i.e., $\alpha = \alpha(0)\alpha(1)\alpha(2)\cdots$.
The length of $\alpha$ is denoted by~$|\alpha|$.
A subset $L \subseteq \Sigma^*$ resp.\ $L \subseteq \Sigma^\omega$ is a \emph{language} resp.\ an \emph{$\omega$-language} over $\Sigma$.
We drop the prefix $\omega$ when it is clear from the context.

Given two infinite words~$\alpha \in (\Sigma_0)^\omega$ and $\beta \in (\Sigma_1)^\omega$, we define $\binom{\alpha}{\beta} = \binom{\alpha(0)}{\beta(0)}\binom{\alpha(1)}{\beta(1)}\binom{\alpha(2)}{\beta(2)} \cdots \in (\Sigma_0 \times \Sigma_1)^\omega$. 

%%%%%%%%%%%%%%%%%%%%%%
\paragraph{Automata.}

An \emph{$\omega$-automaton} is a tuple~$\aut = (Q, \Sigma, q_\initmark, \Delta, \acc)$ where $Q$ is a finite set of states, $\Sigma$ is an alphabet, $q_\initmark \in Q$ is an initial state, $\Delta \subseteq Q \times \Sigma \times Q$ is a transition relation, and $\acc \subseteq \Delta^\omega$ is a set of accepting runs.

An infinite \emph{run}~$\rho$ of $\aut$ is a sequence
\[
\rho = (q_0, a_0, q_1)(q_1, a_1, q_2)(q_2,a_2,q_3) \cdots \in \Delta^\omega.\]
As usual, we say that $\rho$ is initial if $q_0 = q_\initmark$ and we say that $\rho$ processes~$a_0a_1a_2\cdots \in \Sigma^\omega$. 
Given a run $\rho$, let $\infi(\rho)$ be the set of states visited infinitely often by $\rho$, and let $\occ(\rho)$ be the set of states visited by~$\rho$.

The language \emph{recognized} by $\aut$ is the set $L(\aut) \subseteq \Sigma^\omega$ that contains all $\omega$-words that have an initial run in $\acc$ processing it.

An \emph{$\omega$-automaton} $\aut = (Q, \Sigma, q_\initmark, \Delta, \acc)$ is \emph{deterministic} (and complete) if $\Delta$ is given as a total function $\delta\colon Q \times \Sigma \to Q$.
If we speak of \emph{the} run of $\aut$ on $\alpha \in \Sigma^\omega$, then we mean the unique initial run of $\aut$ processing~$\alpha$. 

%%%%%%%%%%%%%%%%%%%%%%
\paragraph{Acceptance conditions.}

We recall safety, reachability, parity, and (weak) Muller acceptance conditions.
% whose sets of accepting runs are finitely represented: 
%

An $\omega$-automaton~$\aut = ( Q, \Sigma, q_\initmark, \Delta, \acc )$ is a \emph{safety automaton}, if there is a set~$F  \subseteq Q$ of accepting states such that 
\[\acc = \set{(q_0, a_0, q_1) (q_1, a_1, q_2) (q_2, a_2, q_3) \cdots \in \Delta^\omega \mid q_i \in F \text{ for every } i}.\]

Moreover, an $\omega$-automaton~$\aut = ( Q, \Sigma, q_\initmark, \Delta, \acc )$ is a \emph{reachability automaton}, if there is a set~$F  \subseteq Q$ of accepting states such that 
\[\acc = \set{(q_0, a_0, q_1) (q_1, a_1, q_2) (q_2, a_2, q_3) \cdots \in \Delta^\omega \mid q_i \in F \text{ for some } i}.\]

Furthermore, an $\omega$-automaton~$\aut = ( Q, \Sigma, q_\initmark, \Delta, \acc )$ is a \emph{parity automaton}, if there is some coloring~$\col \colon Q \rightarrow \nats$ such that
\[\acc = \set{\rho \in \Delta^\omega \mid \text{$\max \set{ \col(q) \mid q \in \infi(\rho) }$ is even}}.\]
% To simplify our notation, define $\col(q,a,q') = \col(q)$.
% It is a \emph{weak parity automaton} if
% \[\acc = \set{\rho \in \Delta^\omega \mid \text{$\max \set{ \col(q) \mid q \in \occ(\rho) }$ is even}}.\]
In the remainder, we use the acronyms~\DPA resp.\ \NPA to refer to deterministic parity automata resp.\ non-deterministic parity automata.

Finally, an $\omega$-automaton~$\aut = ( Q, \Sigma, q_\initmark, \Delta, \acc )$ is a \emph{Muller automaton} resp.\ a \emph{weak Muller automaton}, if there is a family~$\curlyF \subseteq \pow{Q}$ of sets of states such that 
\[\acc = \set{ \rho \in \Delta^\omega \mid \infi(\rho) \in \curlyF}\] 
resp.\ 
\[\acc = \set{ \rho \in \Delta^\omega \mid \occ(\rho)~\in~\curlyF}.\]
In the remainder, we use the acronyms~\wDMA resp.\ \wNMA to refer to deterministic weak Muller automata resp.\ non-deterministic weak Muller automata.

The sets $\acc$ are infinite objects, but they have finite representations.
E.g., for safety conditions $\acc$ is fully specified by a subset~$F \subseteq Q$ of safe states; for (weak) Muller conditions $\acc$ is fully specified by a set~$\curlyF \subseteq \pow{Q}$ of sets of states.
While these representations are usually small in the size of the automaton, this is not guaranteed for Muller acceptance conditions.
Working with Muller automata, the representations of the acceptance condition influence the complexity of  decision problems~\cite{DBLP:conf/mfcs/HunterD05,HunterPhD}.
Hence, we recall different ways of representing Muller conditions which are more succinct than simply listing the elements of $\curlyF$.

%%%%%%%%%%%%%%%%%%%%%%
\paragraph{Representations of acceptance conditions.}

Hunter and Dawar studied the impact of the representation of Muller conditions on the complexity of solving delay-free Muller games~\cite{DBLP:conf/mfcs/HunterD05,HunterPhD}. 
In this paper, we follow their terminology and definitions for these representations.

The most straightforward representation of an acceptance condition of a Muller automaton is an \emph{explicit condition}, i.e., an explicit list of the elements of $\curlyF$.
For short, we refer to a \wDMA resp.\ \wNMA where the acceptance condition is represented as an explicit condition as \explicit-\wDMA resp.\ \explicit-\wNMA.

But there are more succinct representations of Muller automata acceptance conditions, such as \emph{Muller conditions over colors}, \emph{win-set conditions}, \emph{Emerson-Lei conditions}, and \emph{circuit conditions} (see \cite{DBLP:conf/mfcs/HunterD05,HunterPhD} for definitions and for a comparison in the setting of delay-free games).

In this work, we consider Emerson-Lei conditions which are Boolean formulas $\varphi$ with variables from $Q$.
The set $\curlyF_{\varphi}$ specified by $\varphi$ is the set of sets~$F \subseteq Q$ such that the truth assignment that maps each element of $F$ to true and each element of $Q \setminus F$ to false satisfies $\varphi$.
The size of a formula~$\varphi$ is denoted $\size{\varphi}$ and is defined as usual as its length.
For short, we refer to a \wDMA resp.\ \wNMA where the acceptance condition is represented as an Emerson-Lei condition as an \formula-\wDMA resp.\ an \formula-\wNMA.

Let us conclude by mentioning that Emerson-Lei conditions are more succinct than Muller conditions over colors, which are more succinct than win-set conditions, which are more succinct than explicit conditions.
Circuit conditions are at least as succinct as Emerson-Lei conditions, but it is open if circuit conditions are more succinct than Emerson-Lei conditions (meaning it is open whether circuits can always be translated into small formulas).
These (and further) results can be found in \cite{DBLP:conf/mfcs/HunterD05,HunterPhD}.

%%%%%%%%%%%%%%%%%%%%%%%
\paragraph{Size of automata.}

As discussed in the previous paragraph, the representation of the acceptance condition of a (weak) Muller automaton can be more or less succinct.
This should be reflected when defining the size of a (weak) Muller automaton.
Hence, we define the size of a (weak) Muller automaton with Emerson-Lei condition~$\varphi$ as $|Q| + |\varphi|$; the size of a (weak) Muller automaton with explicit condition~$\curlyF$ as $|Q| + |\curlyF|$. 

For other $\omega$-automata acceptance conditions such as, e.g., reachability, safety or parity, we define the size of the automaton as its number of states, as the acceptance condition can be encoded with a polynomial overhead.

%%%%%%%%%%%%%%%%%%%%%%
\paragraph{Delay games.}

A \emph{delay function} is a mapping~$f \colon \nats \rightarrow \nats_{\geq 1}$, which is said to be constant if $f(i) =1$ for all $i>0$. A \emph{delay game}~$\delaygame{L}$ consists of a delay function~$f$ and a \emph{winning condition}~$L \subseteq (\SigmaI \times \SigmaO)^\omega$ for some alphabets~$\SigmaI$ and $\SigmaO$. Such a game is played in rounds~$i = 0,1,2, \ldots$ as follows: in round~$i$, first Player~$I$ picks a word~$u_i \in \SigmaI^{f(i)}$, then Player~$O$ picks a letter~$v_i \in \SigmaO$. Player~$O$ \emph{wins} a play~$(u_0, v_0)(u_1, v_1)(u_2, v_2) \cdots $ if the outcome~$\binom{ u_0 u_1 u_2 \cdots }{ v_0 v_1 v_2 \cdots }$ is in $L$; otherwise, Player~$I$ wins.
Note that if $f$ is a constant delay function, then Player~$O$ has a constant lookahead of $f(0)$ letters on her opponents moves.
Hence, the moniker constant refers to the lookahead induced by $f$, not to $f$ itself.

A \emph{strategy} for Player~$I$ in $\delaygame{L}$ is a mapping $\stratI \colon \SigmaO^* \rightarrow \SigmaI^*$ satisfying $\size{\stratI(w)} = f(\size{w})$ while a strategy for Player~$O$ is a mapping~$\stratO \colon \SigmaI^+ \rightarrow \SigmaO$. A play~$(u_0, v_0)(u_1, v_1)(u_2, v_2) \cdots $ is \emph{consistent} with $\stratI$ if $u_i = \stratI(v_0 \cdots v_{i-1})$ for all $i$, and it is consistent with $\stratO$ if $v_i = \stratO(u_0 \cdots u_i)$ for all $i$. A strategy for Player~$\p \in \set{I,O}$ is \emph{winning}, if every play that is consistent with the strategy is won by Player~$\p$.
We say that Player~$\p\in \set{I,O}$ \emph{wins} a game  $\delaygame{L}$ if Player~$\p$ has a winning strategy in $\delaygame{L}$.

\paragraph{Problem statement.}

In this work, we investigate delay games with weak Muller automata winning conditions.
We are interested in two aspects.

Firstly, what is the computational complexity of deciding whether Player~$O$ wins such a game?
Secondly, how much lookahead does Player~$O$ need to win such a game if possible?

Note that delay games with \wDMA resp.\ \wNMA winning conditions are determined, because delay games with Borel winning conditions are determined \cite{DBLP:journals/csl/Klein015}.
Hence, we also speak of deciding whether Player~$O$ wins such a game as solving such a game.

We end this section with some introductory examples.

%%%%%%
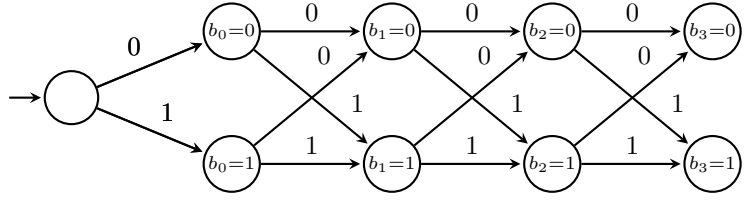
\begin{figure}[t]
  \hspace*{\fill}
  \begin{tikzpicture}[thick]
    \tikzstyle{every state}+=[inner sep=1pt, minimum size=2em];

    \node[state, initial] (s) {};

    \node[state, right of = s, yshift = 2.5em] (00) {\nodelbl{0}{0}};
    \node[state, right of = 00] (10) {\nodelbl{1}{0}};
    \node[state, right of = 10] (20) {\nodelbl{2}{0}};
    \node[state, right of = 20] (30) {\nodelbl{3}{0}};

    \node[state, right of = s, yshift = -2.5em] (01) {\nodelbl{0}{1}};
    \node[state, right of = 01] (11) {\nodelbl{1}{1}};
    \node[state, right of = 11] (21) {\nodelbl{2}{1}};
    \node[state, right of = 21] (31) {\nodelbl{3}{1}};

    \draw[->] (s) edge[auto] node {0} (00);
    \draw[->] (00) edge[auto] node {0} (10);
    \draw[->] (10) edge[auto] node {0} (20);
    \draw[->] (20) edge[auto] node {0} (30);

    \draw[->] (s) edge[auto] node {1} (01);
    \draw[->] (00) edge[auto,near end] node {1} (11);
    \draw[->] (10) edge[auto,near end] node {1} (21);
    \draw[->] (20) edge[auto,near end] node {1} (31);

    \draw[->] (s) edge[auto] node {0} (00);
    \draw[->] (01) edge[auto,near end] node {0} (10);
    \draw[->] (11) edge[auto,near end] node {0} (20);
    \draw[->] (21) edge[auto,near end] node {0} (30);

    \draw[->] (s) edge[auto] node {1} (01);
    \draw[->] (01) edge[auto] node {1} (11);
    \draw[->] (11) edge[auto] node {1} (21);
    \draw[->] (21) edge[auto] node {1} (31);
  \end{tikzpicture}
  \hspace*{\fill}
  \caption{A gadget to store a $4$-bit sequence.}
  \label{fig:gadget-bit-store}
\end{figure}

%%%%%%

\begin{example}\label[ex]{ex:gadget}
  In \cref{fig:gadget-bit-store}, we present a deterministic automaton gadget whose structure is suitable to store a 4-bit sequence.
  What we mean by that is, that for a word starting with $a_0a_1a_2a_3 \in \bool^4$ the corresponding run $(p_0, a_0, p_1)(p_1, a_1, p_2)(p_2, a_2, p_3)(p_3, a_3, p_4) \cdots$ of the automaton yields that the state $p_{i+1}$ is equal to $\nodelbl{i}{a_i}$ for $0 \leq i \leq 3$.
  This idea is easily scaled to an $n$-bit sequence for arbitrary $n \in \nats_{\geq 1}$.
  Note that the gadget is of linear size in~$n$.
\end{example}

Now, we present a delay game which makes use of such a gadget.

\begin{example}\label[ex]{ex:gadgetdelay}
  Let $L \subseteq (\bool^2)^\omega$ be the $\omega$-language that contains all words of the form $(\alpha,\beta)$ such that $\beta(i) = \alpha(4+i)$ for $0 \leq i \leq 3$.
  In words, the second four-bit segment of the input component is equal to the first four-bit segment of the output component.

  \newcommand{\first}[2]{\mathit{G}_1(\nodelbl{#1}{#2})}
  \newcommand{\second}[2]{\mathit{G}_2(\nodelbl{#1}{#2})}

  We describe a \wDMA $\aut$ that recognizes $L$.
  Basically, such an automaton is composed of two successive gadgets similar as in \cref{fig:gadget-bit-store} and a sink state reached after the second gadget.
  In the first gadget, the output component of a letter determines the target state, and in the second gadget, the input component of a letter determines the target state.
  We assume that the states of the first gadget are called $\first{0}{0}$, $\first{0}{1}$, etc., and in the second gadget $\second{0}{0}$, $\second{0}{1}$, etc.

  We give the acceptance condition of $\aut$ as a formula $\varphi$:
  \[
    \begin{array}{ll}
       & (\first{0}{0} \leftrightarrow \second{0}{0}) \\
      \wedge & (\first{1}{0} \leftrightarrow \second{1}{0}) \\
      \wedge & (\first{2}{0} \leftrightarrow \second{2}{0}) \\
      \wedge & (\first{3}{0} \leftrightarrow \second{3}{0})
    \end{array}
  \]
  Clearly, $L(\aut) = L$.
  
  Now, let us consider a delay game with $L$ as winning condition.
  Player~$O$ can win such a game if she is aware of the $(4+k)$-th letter that Player~$I$ plays before she has to give her $k$-th letter (for $1 \leq k \leq 4$).
  The constant delay function $f$ with $f(0) = 5$ gives enough lookahead to ensure this.

We show that Player~$O$ wins $\delaygame{L}$.
Due to the choice of $f(0)$, in each round~$i$, Player~$I$ has produced a prefix~$\alpha(0) \cdots \alpha(i+4)$ that Player~$O$ can base her move in that round on. 
Hence, she picks $\beta(i) = \alpha(i+4)$ in round~$i$.
This strategy is winning for her, as every consistent outcome is in $L$.

  As for \cref{ex:gadget}, this example is easily scaled to an $n$-bit sequence for arbitrary $n \in \nats_{\geq 1}$.
  Note that a formula $\varphi_n$ specifying the acceptance condition is of size $\bigo(n)$ while an explicit representation of $\curlyF_{\varphi_n}$ is of size~$\bigo(2^n)$ as all possible $n$-bit sequences must be explicitly represented.
\end{example}

In the above example, we presented a delay game where a small lookahead is sufficient for Player~$O$ to win.
We now give an example of a family of $\omega$-automata where, when used as a winning condition for a delay game, Player~$O$ needs exponential lookahead (in the size of the automaton) to win.

%%%%%%
\begin{figure}[t]
  \hspace*{\fill}
  \begin{tikzpicture}[thick]
    \tikzstyle{every state}+=[inner sep=3pt, minimum size=1em];
 
    \tikzstyle{small}=[scale=1,draw,gray];
    \tikzstyle{scaled}=[scale=1];
    
    \node[state, small] (c) {};
    \node[state, small, right of = c]         (d) {};
    \node[state, small, below of = c] 				(e) {};
    \node[state, small, right of = e]         (f) {};
    
    \node[state, initial, left of = c, yshift=-3em,xshift=-2em, initial text = $\autp_n$]  (a) {$q_\initmark$};
    \node[state, right of = d, yshift=-3em,xshift=2em] (b) {$q_f$};
    
    \node[scale=0.8,gray] at ($(c)+(-0.5,0.55)$) {$\mathfrak G_0$};
    \node[scale=0.8,gray] at ($(e)+(-0.5,0.55)$) {$\mathfrak G_{n-1}$};
    
    \node[gray,yshift=0.75em] at ($(a)!0.5!(b)$) {$\vdots$};
    
    \draw[->,gray]
      (c) edge[bend left=15] node {} (d)
      (c) edge[loop above]   node {} ()
      (d) edge[loop above]   node {} ()
      (d) edge[bend left=15] node {} (c);
    
    \draw[->,gray]
      (e) edge[bend left=15] node {} (f)
      (e) edge[loop above]   node {} ()
      (f) edge[loop above]   node {} ()
      (f) edge[bend left=15] node {} (e);

    \draw[->]
      (a) edge node[scaled] 
        {$\binom{\ast}{0}$} (c)
      (a) edge node[scaled,swap] 
        {$\binom{\ast}{n-1}$} (e)
      (d) edge node[scaled] 
        {$\binom{0}{\ast}$} (b)
      (f) edge node[scaled,swap] 
        {$\binom{n-1}{\ast}$} (b)
      (b) edge[loop right] node[scaled]
        {$\binom{\ast}{\ast}$} ();
    
    \draw[rounded corners,gray,dashed] ($(c)-(0.9,0.5)$) rectangle ($(d)+(0.9,0.8)$) {};
    \draw[rounded corners,gray,dashed] ($(e)-(0.9,0.5)$) rectangle ($(f)+(0.9,0.8)$) {};
    \end{tikzpicture}
    \hspace*{\fill}

\phantom{bla}

    \hspace*{\fill}
    \begin{tikzpicture}[thick]
      \tikzstyle{every state}+=[inner sep=1pt, minimum size=1em];
      \tikzstyle{scaled}=[scale=1];
      
      \node[state]  at (0,0) (0) {};
      \node[state]  at (2,0) (1) {};
      \node at ($(0)+(-0.75,0.75)$) {$\mathfrak G_j$};
      
      \draw[->]
        (0) edge[bend left=15] node[scaled] 
        {$\binom{j}{\ast}$} (1)
        (0) edge[loop above]   node[scaled] 
        {$\binom{\neq j}{\ast}$} ()
        (1) edge[loop above]   node[scaled]  
        {$\binom{< j}{\ast}$}()
        (1) edge[bend left=15] node[scaled] 
        {$\binom{>j}{\ast}$}(0);

      \draw[rounded corners,dashed] ($(0)-(1.2,-1.25)$) rectangle ($(1)+(1.2,-1)$) {};	  
    \end{tikzpicture}
    \hspace*{\fill}
    \caption{
     The automaton $\autp_n$ (top) contains gadgets $\mathfrak G_0,\dots,\mathfrak G_{n-1}$ (bottom).
     Transitions not depicted lead to a sink state, which is not drawn.
     Here, $\ast$ denotes an arbitrary letter from the respective alphabet.
    }
  \label{fig:jpair}
\end{figure}
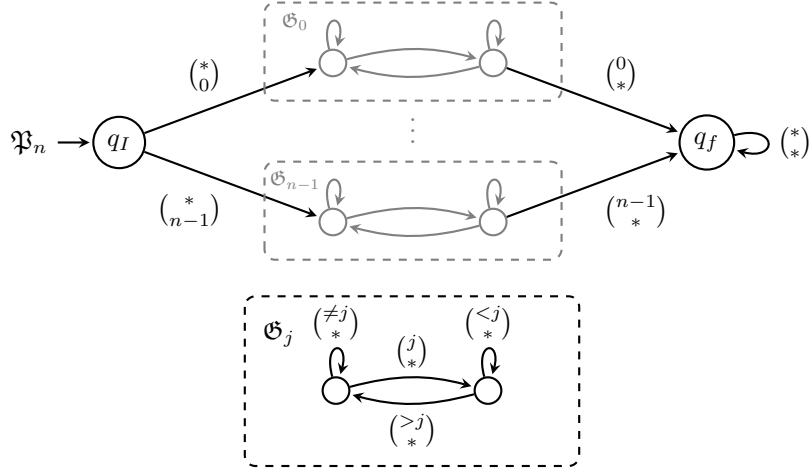
%%%%%%

\begin{example}\label[ex]{ex:jpair}
Pick some $n \in \nats_{\geq 1}$, and let $\Sigma_n = \set{0,\ldots,n-1}$.
A sequence $\alpha \in (\Sigma_n)^* \cup (\Sigma_n)^\omega$ is said to contain a so-called \emph{bad $j$-pair} if there are two positions $p<p'$ such that $\alpha(p) = \alpha(p') = j$ and $\alpha(q) < j$ for all $p < q < p'$.\footnote{\label{footnote:jpairs}Note that we could equivalently require $\alpha(q) \le j$ for all $p < q < p'$. The latter allows occurrences of $j$ between positions $p$ and $p'$. However, one can then pick two such occurrences without any $j$'s in between, which satisfy the stricter definition with $<$.}

Now consider the language $P_n \subseteq (\Sigma_n \times \Sigma_n)^\omega$ that contains a word~$\binom{\alpha}{\beta}$ if, and only if, $\alpha(1)\alpha(2)\cdots$ contains a bad $j$-pair where $j = \beta(0)$.
In words, if the first output letter is $j$, then there is a bad $j$-pair in the input letter stream starting from the second letter. 

We have that $P_n = L(\autp_n)$ where $\autp_n$ is the automaton specified in \cref{fig:jpair} when it is, for example, interpreted as a reachability automaton where $q_f$ has to be reached.
Note that $\autp_n$ is deterministic and of size~$\bigo(n)$.
\end{example}

Klein and Zimmermann~\cite{DBLP:journals/lmcs/KleinZ14} have proven that in order to be guaranteed the existence of a bad $j$-pair for $0 \leq j \leq n-1$ a sequence over $\Sigma_n$ of length at least $2^n$ is necessary.

\begin{proposition}\label[prop]{prop:jpair}
  Let $n \in \nats_{\geq 1}$, and let $\Sigma_n = \set{0,\ldots,n-1}$.
  \begin{itemize}
    \item Every word $w \in (\Sigma_n)^*$ with $\size{w} \geq 2^n$ contains a bad $j$-pair for some $0 \leq j \leq n-1$.
    \item There is a word $w \in (\Sigma_n)^*$ with $\size{w} = 2^n - 1$ that does not contain a bad $j$-pair for every $0 \leq j \leq n-1$.
  \end{itemize}
\end{proposition}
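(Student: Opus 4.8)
The plan is to establish both items by a single induction on $n$, exploiting the recursive structure of words over $\Sigma_n$ that avoid all bad $j$-pairs. The central observation concerns the maximal letter: if the letter $n-1$ occurs at least twice in a word $w$, then any two \emph{consecutive} occurrences of $n-1$ form a bad $(n-1)$-pair, because no letter exceeds $n-1$ and, by consecutiveness, there is no occurrence of $n-1$ strictly in between. Consequently, a word that avoids a bad $j$-pair for every $0 \le j \le n-1$ contains the letter $n-1$ at most once.

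For the upper bound, let $w \in (\Sigma_n)^*$ avoid a bad $j$-pair for all $0 \le j \le n-1$; I want $\size{w} \le 2^n - 1$. The base case $n = 1$ is immediate, since over $\Sigma_1 = \set{0}$ a bad $0$-pair is precisely a pair of adjacent $0$'s, so $\size{w} \le 1$. For the step, if $n-1$ does not occur in $w$, then $w \in (\Sigma_{n-1})^*$ avoids all bad $j$-pairs and the induction hypothesis gives $\size{w} \le 2^{n-1} - 1 \le 2^n - 1$. Otherwise write $w = w_0\,(n-1)\,w_1$ with $w_0, w_1 \in (\Sigma_{n-1})^*$. Neither $w_0$ nor $w_1$ can contain a bad $j$-pair for $j \le n-2$: the two witnessing positions, viewed inside $w$, would again witness a bad $j$-pair of $w$, since every position strictly between them lies in the same block. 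Hence $\size{w_0}, \size{w_1} \le 2^{n-1} - 1$ by induction, and $\size{w} \le (2^{n-1}-1) + 1 + (2^{n-1}-1) = 2^n - 1$.

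For the lower bound, I would take the Zimin-style word $w_n$ defined by $w_1 = 0$ and $w_n = w_{n-1}\,(n-1)\,w_{n-1}$, which satisfies $\size{w_n} = 2\size{w_{n-1}} + 1$ and hence $\size{w_n} = 2^n - 1$. To see that $w_n$ avoids every bad $j$-pair, note first that $n-1$ occurs exactly once in $w_n$ (it does not occur in $w_{n-1}$), so there is no bad $(n-1)$-pair. For $j \le n-2$, any bad $j$-pair has both endpoints labelled $j$, so both lie in the first copy of $w_{n-1}$, both in the second, or one in each. The first two cases are excluded by the induction hypothesis applied to $w_{n-1}$; in the third case the unique occurrence of $n-1$ sits strictly between the two endpoints, and since $n-1 > j$ this violates the defining condition of a bad $j$-pair.

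The whole argument is elementary, so there is no genuine obstacle; the two points that merely need to be stated with care are the inheritance of the ``avoids all bad $j$-pairs'' property when restricting to a factor of $w$ (one must check that positions strictly between the endpoints of a candidate pair remain inside that factor) and the case analysis ``both endpoints in the left copy / both in the right copy / straddling the separating letter'' in the lower bound, where only the straddling case actually uses the single occurrence of the maximal letter.
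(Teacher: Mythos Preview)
Your proof is correct. Note, however, that the paper does not actually prove this proposition: it is stated as a result of Klein and Zimmermann~\cite{DBLP:journals/lmcs/KleinZ14} and cited without proof. Your argument is the standard elementary one (induction on $n$ via the unique occurrence of the maximal letter, and the Zimin-style witness~$w_n = w_{n-1}\,(n{-}1)\,w_{n-1}$), and it goes through without issue.
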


With this in mind, it is intuitively true that in a delay game with winning condition~$P_n$ as given in \cref{ex:jpair}, Player~$O$ can always win if she has at least a lookahead of $2^n$ on Player~$I$'s moves, but not with less lookahead.
This was formally proven in~\cite{DBLP:journals/lmcs/KleinZ14}.
In our upcoming proofs, we will incorporate variations of \cref{ex:jpair}.

To conclude, we show that the game presented in \cref{ex:jpair} implies an exponential lower bound result on the necessary lookahead for Player~$O$ to win delay games with deterministic weak Muller automata winning conditions which are stated explicitly.

The following results is obtained by picking $L_n = P_n$ and showing that $\autp_n$ is an $\explicit$-$\wDMA$ with set $\curlyF = \set{F_0,\ldots,F_{n-1}}$, where the set~$F_i$ contains $q_\initmark$ and $q_f$ as well as all (that is, the two states) from gadget $\mathfrak{G}_i$.
  To see this, consider a run that starts in $q_\initmark$ and reaches the sink~$q_f$: only one of the gadgets $\mathfrak{G_i}$ is visited.
  Also, both states of $\mathfrak{G_i}$ must be visited.

\begin{corollary}
  \label{cor:detmullerLowerboundLookaheadExplicit}
  For every $n \in \nats_{\geq 1}$, there exists a language $L_n$ recognized by an \explicit-\wDMA~$\aut_n$ of size~$\bigo(n)$ such that
  \begin{itemize}
    \item Player~$O$ wins $\delaygame{L_n}$ for some constant delay function $f$, but
    \item Player~$I$ wins $\delaygamep{L_n}$ for every delay function $g$ with $g(0) \leq 2^{n}$.
  \end{itemize}
\end{corollary}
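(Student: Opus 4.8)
The plan is to take $L_n = P_n$ from \cref{ex:jpair} and reinterpret the automaton $\autp_n$ of \cref{fig:jpair} as an $\explicit$-$\wDMA$ rather than a reachability automaton, so that the two bullet points follow from what is already established about $P_n$. First I would complete $\autp_n$ to a genuinely deterministic and complete automaton $\aut_n$ in the obvious way: every transition not depicted in \cref{fig:jpair} (including all letters from the states $q_\initmark$ and $q_f$ that are not already handled, and all undepicted letters inside and between the gadgets) leads to a fresh sink state $q_\bot$, with $q_\bot$ looping on every letter. This keeps $\size{\aut_n} = \bigo(n)$ in the number of states. Then I would define the explicit acceptance family $\curlyF = \set{F_0,\dots,F_{n-1}}$ with $F_i = \set{q_\initmark, q_f} \cup V(\mathfrak G_i)$, where $V(\mathfrak G_i)$ is the two-element state set of gadget $\mathfrak G_i$, and argue that $L(\aut_n) = P_n$ under the weak Muller reading: a run from $q_\initmark$ whose occurrence set is some $F_i$ has never seen $q_\bot$ (so it stayed inside the "useful" part of the automaton), has reached $q_f$, and has visited \emph{both} states of exactly the gadget $\mathfrak G_i$ and of no other gadget; by the structure of $\mathfrak G_j$, visiting both of its states after entering it on $\binom{\ast}{j}$ and before leaving to $q_f$ is exactly what witnesses a bad $j$-pair in $\alpha(1)\alpha(2)\cdots$. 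Conversely any accepted word of $P_n$ induces such a run. The one subtlety to check is that once the run has reached $q_f$ it loops there forever on $\binom{\ast}{\ast}$ and never returns to a gadget, so the occurrence set does not accidentally accumulate states of several gadgets or of $q_\bot$; this is immediate from the transition picture, and it also shows that the occurrence set being \emph{exactly} some $F_i$ (as opposed to a superset) is the right condition.

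Given $L(\aut_n) = P_n$, the two bullet points are then inherited directly. For the first, the winning strategy for Player~$O$ described after \cref{ex:jpair} (and proven in~\cite{DBLP:journals/lmcs/KleinZ14}): with a lookahead window large enough to guarantee a bad $j$-pair among the first block of input letters — which by the first item of \cref{prop:jpair} happens once she has seen $2^n$ input letters past the initial one — she can commit to a correct first output letter $j = \beta(0)$ and then play arbitrarily, so Player~$O$ wins $\delaygame{L_n}$ for a suitable constant $f$. For the second, the lower bound of~\cite{DBLP:journals/lmcs/KleinZ14}: with $g(0) \le 2^n$ Player~$O$ has seen at most $2^n$ input letters, of which only $2^n - 1$ lie beyond position $0$, when she must fix $\beta(0)$; by the second item of \cref{prop:jpair} Player~$I$ can play the bad-$j$-pair-free word of length $2^n - 1$ as $\alpha(1)\cdots\alpha(2^n-1)$ and then continue so as to avoid a bad $j$-pair for whatever $j$ Player~$O$ eventually commits to, hence Player~$I$ wins $\delaygamep{L_n}$.

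The main obstacle is not any of these individual steps — each is routine — but making the claim "$L(\aut_n) = P_n$ as a weak Muller automaton" genuinely airtight, because the weak Muller semantics is about the \emph{occurrence} set of the run, which is a global property of the whole infinite run, whereas the reachability reading only cares about whether $q_f$ is ever hit. The delicate direction is showing that no word outside $P_n$ is accepted: one must rule out runs whose occurrence set happens to equal some $F_i$ "for the wrong reason", e.g. a run that wanders through gadget $\mathfrak G_i$, touches both of its states, but via a $\binom{<j}{\ast}$/$\binom{>j}{\ast}$ pattern that does \emph{not} correspond to a bad $j$-pair in the stream $\alpha(1)\alpha(2)\cdots$. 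Here the footnote in \cref{ex:jpair} about the $\le j$ versus $< j$ formulation of bad $j$-pairs, together with the precise self-loop labels $\binom{< j}{\ast}$ and $\binom{\neq j}{\ast}$ of $\mathfrak G_j$, is exactly what guarantees that reaching the second state of $\mathfrak G_j$ and coming back (or staying) forces two occurrences of $j$ with only smaller letters in between — I would spell this correspondence out carefully, as it is the crux of why the explicit weak Muller reading recognizes the same language as the reachability reading.
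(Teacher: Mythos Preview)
Your proposal is correct and follows essentially the same route as the paper: take $L_n = P_n$, equip $\autp_n$ (already completed with a sink in \cref{fig:jpair}) with the explicit family $\curlyF = \set{F_0,\dots,F_{n-1}}$ where $F_i$ consists of $q_\initmark$, $q_f$, and the two states of $\mathfrak G_i$, observe that this yields the same language as the reachability reading, and inherit both bullet points from~\cite{DBLP:journals/lmcs/KleinZ14} via \cref{prop:jpair}.

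One remark: you flag as the ``delicate direction'' the task of ruling out runs whose occurrence set equals some $F_i$ ``for the wrong reason''. This direction is in fact the trivial one. Since $q_f \in F_i$, any run with occurrence set $F_i$ visits $q_f$; but \cref{ex:jpair} already asserts that the reachability reading of $\autp_n$ recognizes exactly $P_n$, so such a run processes a word in $P_n$, full stop. The structural analysis of $\mathfrak G_j$ you sketch (why reaching state~1 and then exiting to $q_f$ witnesses a bad $j$-pair) is really a re-verification of that reachability claim, not of the weak-Muller-versus-reachability equivalence. The only genuine content of the equivalence is the forward direction---that every run reaching $q_f$ has occurrence set equal to some $F_i$---which is what the paper spells out: exactly one gadget is entered, both of its states must be traversed to reach $q_f$, and the sink is never seen (it is absorbing).
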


% \begin{proof}
% %   Recall \cref{ex:jpair}, the introductory bad $j$-pair example. As already mentioned in the paragraph after \cref{prop:jpair}, Klein and Zimmermann~\cite{DBLP:journals/lmcs/KleinZ14} have shown the following:
% %   If $\autp_n$ from \cref{ex:jpair} is interpreted as a reachability automaton where the rightmost state has to be reached then
% %   \begin{itemize}
% %     \item Player~$O$ wins $\delaygame{L(\autp_n)}$ for some constant delay function $f$, but
% %     \item Player~$I$ wins $\delaygamep{L(\autp_n)}$ for every delay function $g$ with $g(0) \leq 2^{n}$.
% %   \end{itemize}
%   We argue that it suffices to interpret 
% \end{proof}

\section{Upper Bounds for Weak Muller Delay Games}
\label{sec:upperbounds}

We begin this section with a recap of known results used to prove our upper bounds on complexity and necessary lookahead.
Klein and Zimmermann~\cite{DBLP:journals/lmcs/KleinZ14,DBLP:conf/fsttcs/KleinZ16} have shown the following complexity results about parity delay games.

\begin{proposition}\label[prop]{prop:parity}
  \phantom{abc}
  \begin{enumerate}
    \item Solving delay games with \DPA winning conditions is \exptime-complete.
    
    \item Solving delay games with \NPA winning conditions is \twoexp-complete.
  \end{enumerate}
\end{proposition}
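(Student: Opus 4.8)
The plan is to recall the arguments of Klein and Zimmermann~\cite{DBLP:journals/lmcs/KleinZ14,DBLP:conf/fsttcs/KleinZ16}, on which this proposition is a recap. For the upper bound of Item~1, fix a \DPA~$\aut$ with state set~$Q$ and coloring~$\col\colon Q\to\nats$ recognizing~$L$. I would reduce $\delaygame{L}$ to an equivalent \emph{delay-free} parity game of exponential size. The key step is a finite abstraction of the lookahead: to a finite input word~$u\in\SigmaI^*$ associate its \emph{profile}, which records, for every source state~$q\in Q$, the set of pairs~$(q',c)$ such that Player~$O$ can pick an output word~$v$ with $\size v=\size u$ driving $\aut$ from~$q$ to~$q'$ with maximal color~$c$ seen along the way (it suffices to keep, for each pair~$(q,q')$, the color optimal w.r.t.\ the parity preference order). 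There are only singly-exponentially many profiles in~$\size{Q}$. In the delay-free game, a position consists of the profile of the currently buffered (seen but unconsumed) input together with the current $\aut$-state; Player~$I$ extends the buffer by announcing a profile of a fresh input block, Player~$O$ consumes it one letter at a time, advancing $\aut$, and a parity condition inherited from~$\col$ determines the winner, with bookkeeping ensuring the buffer never underflows. One then shows that Player~$O$ wins $\delaygame{L}$ for a suitable constant delay function (giving enough lookahead to realize one block per round) if, and only if, she wins this delay-free parity game; since the arena has exponentially many states and polynomially many colors, it is solvable in \exptime.

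For the matching lower bound of Item~1, which already holds for deterministic safety automata, I would reduce from a canonical \exptime-complete problem, e.g.\ the word problem of alternating polynomial-space Turing machines ($\apspace=\exptime$), or equivalently from countdown games. Given such a machine~$\TM$ and an input, design a deterministic safety automaton over a small alphabet so that the induced delay game forces the players to spell out a computation of~$\TM$: Player~$I$ drives the play and challenges Player~$O$, who uses her (exponential) lookahead to answer consistently, while any malformed configuration or illegal transition is immediately detected and violates safety. The construction is polynomial in~$\size\TM$ and the input, yielding \exptime-hardness.

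For Item~2, the upper bound follows by determinizing the given \NPA into a \DPA with exponentially many states and polynomially many colors (Safra/Piterman) and then invoking Item~1: solving the resulting \DPA delay game takes time exponential in the already exponential \DPA, hence doubly-exponential time overall. The lower bound, which already holds for non-deterministic safety automata, is obtained by the analogous reduction from the word problem of alternating exponential-space Turing machines ($\aexpspace=\twoexp$); here non-determinism lets the automaton guess the (now exponentially long) configurations succinctly, and it is the matching doubly-exponential lookahead that enables Player~$O$ to answer consistently.

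The main obstacle in each case is establishing that \emph{bounded} lookahead suffices --- exponential for \DPA, doubly-exponential for \NPA --- and that the profile abstraction is sound and complete for determining the winner. This rests on a pigeonhole/Ramsey-type argument on block profiles (\cref{prop:jpair} is a concrete instance of the phenomenon): once Player~$I$ has played more input letters than there are profiles, any additional lookahead can be simulated by pumping within an already-seen profile, so Player~$O$ loses nothing by committing to output. Turning this intuition into a correct equivalence between the delay game and the finite delay-free game, while keeping the state and color bounds tight enough to stay within \exptime\ resp.\ \twoexp, is the technically delicate part; the full details are in~\cite{DBLP:journals/lmcs/KleinZ14,DBLP:conf/fsttcs/KleinZ16}.
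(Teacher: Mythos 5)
This proposition is imported verbatim from Klein and Zimmermann~\cite{DBLP:journals/lmcs/KleinZ14,DBLP:conf/fsttcs/KleinZ16}; the paper gives no proof of it, only the citation. Your sketch is a faithful reconstruction of the cited arguments --- profile-based reduction to an exponential delay-free parity game for the \DPA upper bound, \apspace- resp.\ \aexpspace-hardness via Turing-machine simulation for the lower bounds, and determinization for the \NPA upper bound --- so there is nothing in the paper to compare it against beyond agreement with the source works.
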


Furthermore, Klein and Zimmermann~\cite{DBLP:journals/lmcs/KleinZ14,DBLP:conf/fsttcs/KleinZ16} have shown tight bounds on the necessary lookahead needed to win parity delay games.
Here, we are interested in the upper bound results.

\begin{proposition}\label[prop]{prop:paritylookahead}
  \phantom{abc}
  \begin{enumerate}
    \item For every \DPA $\aut$, the following are equivalent:
    \begin{itemize}
      \item Player~$O$ wins $\delaygame{L(\aut)}$ for some constant delay function $f$ where $f(0)$ is exponential in the size of $\aut$. 
      \item Player~$O$ wins $\delaygamep{L(\aut)}$ for some delay function $g$.
    \end{itemize}
    \item For every \NPA $\aut$, the following are equivalent:
    \begin{itemize}
      \item Player~$O$ wins $\delaygame{L(\aut)}$ for some constant delay function $f$ where $f(0)$ is doubly-exponential in the size of $\aut$. 
      \item Player~$O$ wins $\delaygamep{L(\aut)}$ for some delay function $g$.
    \end{itemize}
  \end{enumerate} 
\end{proposition}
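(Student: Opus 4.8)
The implication from a constant delay function to an arbitrary delay function is trivial, since every constant delay function is in particular a delay function; so in both parts the content is the converse, namely that winning with \emph{some} lookahead already implies winning with (doubly-)exponential lookahead. I describe the argument for part~1 (the \DPA case) and then reduce part~2 to it.

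The plan for part~1 is to funnel the delay game through a finite game in two steps. First, following Holtmann, Kaiser, and Thomas, one replaces $\delaygamep{L(\aut)}$ by a \emph{block game}: Player~$I$ plays $\SigmaI$-blocks of some fixed length $k$ (with the first block longer, to create lookahead), Player~$O$ answers with $\SigmaO$-blocks of length $k$ but always one block behind, and $O$ wins if the resulting outcome is in $L(\aut)$. A strategy translation in both directions shows that Player~$O$ wins $\delaygamep{L(\aut)}$ for some delay function $g$ if and only if she wins the block game for some block length $k \in \nats_{\geq 1}$, and a winning block strategy for block length $k$ yields a winning strategy for the constant delay function $f$ with $f(0)$ linear in $k$. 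Second, one abstracts a block into a finite \emph{profile}: since $\aut$ is deterministic, the profile of a block $u \in \SigmaI^k$ records, for every state $q$, the set of pairs $(q', c)$ such that some response $v \in \SigmaO^k$ drives the run from $q$ to $q'$ while visiting largest color $c$ in between. Profiles form a finite monoid under concatenation, generated by the single-letter profiles, and there are at most $2^{\size{Q}^2 \cdot \size{\mathrm{rng}(\col)}}$ of them, i.e.\ exponentially many in the size of $\aut$. Tracking the unique run, Player~$O$ only ever needs to know the profile of the incoming block --- from her current state she picks one admissible transition and realizes it by a concrete response --- so the block game has the same winner as a delay-free parity game on a finite arena whose positions store the current run state together with the profile of the most recently revealed block, with priorities given by the per-block largest colors (whose limit superior is the largest color seen infinitely often).

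Parity games on finite arenas are positionally determined, so if Player~$O$ wins this abstract game she wins it positionally; moreover, by a pumping argument on the profile monoid the abstract game is insensitive to $k$ once $k$ is large enough for all relevant profiles to be realized by blocks of length exactly $k$, and such a $k$ can be chosen exponential in the size of $\aut$. Threading a positional abstract winning strategy back through the block game then yields a winning strategy for Player~$O$ in $\delaygame{L(\aut)}$ for a constant delay function $f$ with $f(0)$ exponential in the size of $\aut$. Part~2 follows from part~1 by determinization: convert the \NPA $\aut$ into an equivalent \DPA $\autb$ of size exponential in the size of $\aut$ (pass to a Büchi automaton and apply the Safra/Piterman construction); as the delay game depends only on the language $L(\aut) = L(\autb)$, part~1 applied to $\autb$ provides a constant delay function $f$ with $f(0)$ exponential in the size of $\autb$, hence doubly-exponential in the size of $\aut$.

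The step I expect to be the main obstacle is the soundness and completeness of the profile abstraction: one must show that replacing Player~$I$'s blocks by their profiles changes neither player's winning status --- that a winning $O$-strategy in the block game can be assumed to depend only on the profile sequence, and conversely that any profile-level strategy can be instantiated by genuine $\SigmaO$-blocks --- while also faithfully lifting the parity condition to the finite arena and carefully handling the block-length bookkeeping (ensuring that exactly the intended profiles are available at block length $k$, so that the abstract game really has a $k$-independent winner).
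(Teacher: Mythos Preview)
The paper does not prove this proposition at all: it is stated as a recap of results from Klein and Zimmermann~\cite{DBLP:journals/lmcs/KleinZ14,DBLP:conf/fsttcs/KleinZ16} and used as a black box to derive \cref{thm:upperboundLookahead}. So there is no ``paper's own proof'' to compare against here.

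That said, your sketch is a fair outline of the argument behind the cited results, with one notable shortcut. For part~1, the block-game reduction together with a profile abstraction (recording, per start state, the reachable end-state/maximal-color pairs over all output blocks) and positional determinacy of the resulting finite parity game is indeed the Klein--Zimmermann line of attack; your identified obstacle --- proving that profile-equivalence of input blocks is coarse enough to preserve the winner yet fine enough to recover concrete $\SigmaO$-responses --- is exactly where the work lies. For part~2, you take a different route than~\cite{DBLP:conf/fsttcs/KleinZ16}: that paper develops a direct profile construction for \NPA{s} (needed there also for the complexity upper bound), whereas you simply determinize and reapply part~1. Your shortcut is perfectly sound for the lookahead statement alone, since the delay game depends only on $L(\aut)$, and the exponential blow-up of Safra/Piterman composed with the exponential lookahead bound for \DPA{s} yields the claimed doubly-exponential bound; it just would not by itself recover the \twoexp\ complexity bound of \cref{prop:parity}(2).
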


To obtain our upper bounds on weak Muller delay games stated below (see \cref{thm:upperbound,thm:upperboundLookahead}), we explicitly state the following easy result on the translation from weak Muller into parity automata.

\begin{lemma}\label[lem]{lemma:main}
  \phantom{abc}
  \begin{enumerate}
    \item For every \wDMA (in any representation) there exists an equivalent exponentially-sized \DPA.
    
    \item For every \wNMA (in any representation) there exists an equivalent exponentially-sized \NPA.
\end{enumerate}
  \end{lemma}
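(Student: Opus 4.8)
The plan is to prove both statements by a uniform construction that records, along a run, the set of states seen so far. Given a weak Muller automaton $\aut = (Q, \Sigma, q_\initmark, \Delta, \acc)$ with $\acc = \set{\rho \mid \occ(\rho) \in \curlyF}$, I would build an automaton $\aut'$ whose states are pairs $(q, S)$ with $q \in Q$ and $S \subseteq Q$, the intended invariant being that after processing a finite prefix, $S$ equals the set of states occurring on that prefix. The initial state is $(q_\initmark, \set{q_\initmark})$, and a transition $((q,S), a, (q', S'))$ is present in $\aut'$ exactly when $(q,a,q') \in \Delta$ and $S' = S \cup \set{q'}$. This preserves determinism: if $\aut$ is deterministic then the $S$-component is a function of the history, so $\aut'$ is deterministic as well. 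The state space of $\aut'$ has size $|Q| \cdot 2^{|Q|}$, i.e., exponential in $|Q|$ and hence exponential in the size of $\aut$ regardless of how $\curlyF$ is represented, which is exactly the succinctness-independence asserted in the lemma.

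The key observation is that along any initial run of $\aut'$, the $S$-component is monotonically non-decreasing and bounded, so it stabilizes: there is a position from which $S$ is constantly equal to $\occ(\rho)$, where $\rho$ is the corresponding run of $\aut$. Consequently the set of states visited infinitely often by the run of $\aut'$ is contained in $Q \times \set{\occ(\rho)}$, and in fact its projection to the second component is the singleton $\set{\occ(\rho)}$. I would then equip $\aut'$ with a parity coloring that only needs to distinguish "the stable $S$-value lies in $\curlyF$" from "it does not": assign color $2$ to every state $(q,S)$ with $S \in \curlyF$ and color $1$ to every state $(q,S)$ with $S \notin \curlyF$. Since the $S$-component is eventually constant, the maximal color seen infinitely often is $2$ if that final value is in $\curlyF$ and $1$ otherwise; thus the maximal color is even iff $\occ(\rho) \in \curlyF$ iff $\rho$ is accepting in $\aut$. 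This shows $L(\aut') = L(\aut)$ and that $\aut'$ is a \DPA (resp. \NPA), proving both parts simultaneously.

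The main obstacle — which is genuinely minor here, consistent with the lemma being labelled "easy" — is simply checking the correspondence between initial runs of $\aut$ and initial runs of $\aut'$ carefully: every initial run $\rho$ of $\aut$ lifts to a unique initial run of $\aut'$ with the advertised $S$-invariant, and conversely every initial run of $\aut'$ projects to an initial run of $\aut$; this is an immediate induction on run length using the definition of the transitions. One should also note that completeness is preserved when $\aut$ is deterministic and complete, since for each $(q,S)$ and $a$ the successor $(q', S \cup \set{q'})$ with $q' = \delta(q,a)$ always exists and is unique. With the coloring needing only the values $\set{1,2}$, the parity acceptance condition of $\aut'$ has constant size, so the size bound $|Q| \cdot 2^{|Q|}$ on the state set is also a bound on the size of $\aut'$ under the convention that the size of a parity automaton is its number of states.
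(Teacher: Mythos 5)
Your construction is correct and is essentially the paper's own proof: the product with $2^Q$ accumulating the occurrence set, the coloring $2$/$1$ according to membership of the second component in $\curlyF$, and the appeal to stabilization of the occurrence set are exactly the argument given there. Your additional remarks on determinism, completeness, and the run correspondence are fine but do not change the approach.
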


  \begin{proof}
    Given a weak Muller automaton~$\aut$, it suffices to construct a parity automaton~$\aut'$ that additionally tracks the occurrence set of a run of~$\aut$.
    To this end, the parity automaton uses the state set~$Q \times 2^Q$ where $Q$ is the set of states of $\aut$. 
    The first component simulates a run of $\aut$ while the second one accumulates the occurrence set of the run prefix simulated thus far.
    
    States of the form $(q, O)$ with $O \in \curlyF$ are assigned color~$2$ and those with $O \notin \curlyF$ are assigned color~$1$. 
    The resulting parity automaton is equivalent to the original weak Muller automaton, as the occurrence set eventually stabilizes.
  \end{proof}

We are ready to prove our upper bounds (regarding complexity and necessary lookahead) for delay games with weak Muller automata winning conditions.

  The combination of \cref{prop:parity,lemma:main} immediately yields our first theorem which states upper bound complexity results.
  
  \begin{theorem}\label{thm:upperbound}
    \phantom{abc}
  \begin{enumerate}
      \item Solving delay games with \wDMA winning conditions (in any representation) is in \twoexp.
      
      \item Solving delay games with \wNMA winning conditions (in any representation) is in \mbox{\threeexp}.
  \end{enumerate}
  \end{theorem}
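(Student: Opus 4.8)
The plan is to combine the two auxiliary results that have just been established, namely the complexity of parity delay games (\cref{prop:parity}) and the exponential translation of weak Muller automata into parity automata (\cref{lemma:main}). The overall argument is entirely routine; the only thing to check is that the exponential blow-up composes with the singly-/doubly-exponential running times in the way one expects.

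First I would prove part~(1). Given a \wDMA~$\aut$ (in any representation) of size~$n$, apply \cref{lemma:main}(1) to obtain an equivalent \DPA~$\aut'$ of size~$2^{\bigo(n)}$, and note that $\aut'$ can be constructed in time polynomial in its size, hence in time $2^{\bigo(n)}$. Since $L(\aut') = L(\aut)$, solving $\delaygame{L(\aut)}$ is the same as solving $\delaygame{L(\aut')}$. By \cref{prop:parity}(1), the latter can be decided in time exponential in the size of $\aut'$, i.e., in time $2^{\bigo(2^{\bigo(n)})} = 2^{2^{\bigo(n)}}$, which is doubly exponential in $n$. Hence solving delay games with \wDMA winning conditions is in \twoexp.

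Part~(2) is identical mutatis mutandis: from a \wNMA of size~$n$ we obtain via \cref{lemma:main}(2) an equivalent \NPA of size $2^{\bigo(n)}$, and \cref{prop:parity}(2) solves the resulting delay game in time doubly exponential in the size of that automaton, i.e., in time $2^{2^{\bigo(2^{\bigo(n)})}} = 2^{2^{2^{\bigo(n)}}}$, which is triply exponential in~$n$. Hence solving delay games with \wNMA winning conditions is in \threeexp.

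There is no real obstacle here; the only point deserving a sentence of care is that the bound of \cref{lemma:main} is on the \emph{size} of the automaton (number of states plus representation of the acceptance condition), and one must make sure that feeding an automaton of size $2^{\bigo(n)}$ into a procedure whose running time is (doubly) exponential in the size of its input indeed yields the claimed (triply) exponential overall bound — but since composing a tower of exponentials with an exponent merely absorbs a constant factor into the top exponent, this goes through without difficulty.
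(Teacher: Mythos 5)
Your proposal is correct and follows exactly the paper's own route: the paper likewise obtains this theorem as an immediate combination of \cref{prop:parity} and \cref{lemma:main}. Your added remark on how the exponential blow-up composes with the (doubly-)exponential running time is a harmless elaboration of the same argument.
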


  The combination of \cref{prop:paritylookahead,lemma:main} immediately yields our next theorem which gives upper bounds on the necessary lookahead for Player~$O$ to win delay games.
  
\begin{theorem}\label{thm:upperboundLookahead}
  \phantom{abc}
  \begin{enumerate}
    \item For every \wDMA $\aut$ (in any representation), the following are equivalent:
    \begin{itemize}
      \item Player~$O$ wins $\delaygame{L(\aut)}$ for some constant delay function $f$ where $f(0)$ is doubly-exponential in the number of states of $\aut$.    
      \item Player~$O$ wins $\delaygamep{L(\aut)}$ for some delay function $g$.
    \end{itemize}
    \item For every \wNMA $\aut$ (in any representation), the following are equivalent:
    \begin{itemize}
      \item Player~$O$ wins $\delaygame{L(\aut)}$ for some constant delay function $f$ where $f(0)$ is triply-exponential in the number of states of $\aut$.    
      \item Player~$O$ wins $\delaygamep{L(\aut)}$ for some delay function $g$.
    \end{itemize}
  \end{enumerate}
 \end{theorem}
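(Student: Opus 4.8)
The plan is to derive \cref{thm:upperboundLookahead} directly from \cref{prop:paritylookahead} together with the translation in \cref{lemma:main}, exactly as announced in the text preceding the statement. The only subtlety is bookkeeping about \emph{sizes}: \cref{prop:paritylookahead} gives lookahead bounds that are exponential (resp.\ doubly-exponential) in the size of a parity automaton, while \cref{thm:upperboundLookahead} asks for bounds in the \emph{number of states} of the weak Muller automaton. So the first step is to fix a \wDMA (resp.\ \wNMA) $\aut$ with state set $Q$ and apply \cref{lemma:main} to obtain an equivalent \DPA (resp.\ \NPA) $\aut'$ whose number of states is $|Q| \cdot 2^{|Q|}$, hence exponential in $|Q|$. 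Note that this step is representation-independent: \cref{lemma:main} is stated ``in any representation'', and the construction in its proof only looks at $\curlyF$ as a set, so no assumption on how the acceptance condition is encoded is needed — this is why the theorem can be stated ``in any representation'' as well.

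Next I would carry out the two directions of the equivalence. For the interesting direction, suppose Player~$O$ wins $\delaygamep{L(\aut)}$ for some delay function~$g$. Since $L(\aut) = L(\aut')$, she also wins $\delaygamep{L(\aut')}$, so the second bullet of the corresponding item of \cref{prop:paritylookahead} holds for $\aut'$. Hence the first bullet holds: Player~$O$ wins $\delaygame{L(\aut')}$ for some constant delay function~$f$ with $f(0)$ exponential (resp.\ doubly-exponential) in the size of~$\aut'$. Composing the two exponentials — exponential in $|Q|$ for the size of $\aut'$, then exponential (resp.\ doubly-exponential) in that — yields $f(0)$ doubly-exponential (resp.\ triply-exponential) in $|Q|$, and again $L(\aut') = L(\aut)$ lets us replace $\aut'$ by $\aut$ in the winning condition. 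The converse direction is trivial: if Player~$O$ wins $\delaygame{L(\aut)}$ for \emph{some} constant delay function, then in particular she wins $\delaygamep{L(\aut)}$ for the delay function $g = f$, since a constant delay function is itself a delay function.

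There is no real obstacle here; the argument is a two-line composition once \cref{lemma:main,prop:paritylookahead} are in place. The one point worth stating carefully is the size estimate: the size of $\aut'$ (number of states, which for parity automata is the agreed measure of size) is at most $|Q|\cdot 2^{|Q|} \le 2^{2|Q|}$ for $|Q|\ge 1$, so a bound of the form $2^{p(\text{size of }\aut')}$ for a polynomial $p$ is bounded by $2^{p(2^{2|Q|})}$, which is doubly-exponential in $|Q|$; one more exponential on top in the non-deterministic case gives triply-exponential. I would spell this out in one sentence and otherwise keep the proof to the three short paragraphs: (i) apply \cref{lemma:main} to get $\aut'$ with exponentially many states; (ii) transfer the equivalence of \cref{prop:paritylookahead} through $L(\aut')=L(\aut)$ and compose the exponentials; (iii) note the reverse implication is immediate. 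Since the statement is flagged ``immediately yields'' in the surrounding text, I would expect the actual proof in the paper to be essentially this, possibly compressed to a couple of sentences.
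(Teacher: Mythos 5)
Your proposal is correct and matches the paper exactly: the paper simply states that \cref{thm:upperboundLookahead} follows immediately from combining \cref{lemma:main} with \cref{prop:paritylookahead}, which is precisely the composition of exponentials (and the trivial converse) that you spell out. Your extra care about measuring the parity automaton's size in states versus the weak Muller automaton's number of states is a sensible elaboration of a step the paper leaves implicit.
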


  In \cref{sec:detmuller,sec:nondetmuller} we show matching lower bounds for \cref{thm:upperbound,thm:upperboundLookahead}.

\section{Lower Bounds for Deterministic Weak Muller Delay Games}
\label{sec:detmuller}

This section is devoted to showing lower bounds for delay games with deterministic weak Muller automata winning conditions.

\paragraph{Lookahead.}

We begin this section by showing a doubly-exponential lower bound on the necessary lookahead for Player~$O$ to win, which yields a tight bound in combination with \cref{thm:upperboundLookahead}.

Recall \cref{ex:jpair}, where the concept of bad $j$-pairs was introduced, and an automaton of size $\bigo(n)$ was given where the first letter, say $i_0 \in [0,n-1]$ of the output component indicates the existence of a bad $i_0$-pair in the input component (starting from the second letter) which is a word over $[0,n-1]$.
  Recall that according to \cref{prop:jpair}, one needs exponential lookahead in $n$ to correctly identify a bad $j$-pair in a word over $[0,n-1]$.

  We are going to design a variant where a doubly-exponential lookahead is necessary.
To this end, we encode the numbers in the range~$[0,2^n-1]$ in binary and show how to construct a small automaton that checks whether the input component has a bad $i_0$-pair, where $i_0$ is the first number in the output component. 
This results in a game that Player~$O$ can only win with doubly-exponential lookahead.

In \cref{ex:jpair}, the automaton stores $i_0$ in its state space, as there are only $n$ possibilities.
However, this is no longer feasible with the range~$[0,2^n-1]$ and a polynomially-sized automaton, as there are $2^n$ possible values for $i_0$.
Instead, Player~$O$ has to mark two numbers she claims to form a bad $i_0$-pair and the automaton then checks whether the two marked numbers are equal to $i_0$ and whether all numbers in between are strictly smaller than $i_0$.
Using the succinctness of the Emerson-Lei condition, this can be achieved with a linearly-sized automaton.

\begin{theorem}\label{thm:detmullerLowerboundLookahead}
  For every $n \in \nats_{\geq 1}$, there exists a language $L_n$ recognized by an \formula-\wDMA $\aut_n$ of size $\bigo(n)$ such that
  \begin{itemize}
    \item Player~$O$ wins $\delaygame{L_n}$ for some constant delay function $f$, but
    \item Player~$I$ wins $\delaygamep{L_n}$ for every delay function $g$ with $\sum_{i = 0}^{n-1} g(i) \leq 2^{2^n}$.
  \end{itemize}
\end{theorem}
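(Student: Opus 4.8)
The plan is to adapt the bad $j$-pair construction of \cref{ex:jpair} so that the ``colors'' $j$ range over $[0,2^n-1]$ rather than $[0,n-1]$, with each such value encoded as an $n$-bit string, while keeping the automaton polynomially sized. The winning condition $L_n$ will be over an alphabet that, in the input component, carries blocks of $n$ bits (one value from $[0,2^n-1]$ per block) together with a marking bit, and in the output component carries the claimed value $i_0\in\{0,1\}^n$ (spread over the first $n$ letters, using the gadget idea of \cref{ex:gadget}). Player~$I$ produces a stream of $n$-bit values; Player~$O$ must name a value $i_0$ and Player~$I$ must be forced (on the marked positions) to exhibit a bad $i_0$-pair, i.e.\ two occurrences of the value $i_0$ with all intervening values strictly smaller than $i_0$. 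By \cref{prop:jpair} applied with the alphabet $\Sigma_{2^n}$, guaranteeing such a pair requires a sequence of length $2^{2^n}$, so if Player~$O$ has less lookahead than that she cannot reliably verify that her chosen $i_0$ (which she commits to early) actually admits a bad pair in Player~$I$'s stream; Player~$I$ can diagonalize against any shorter lookahead exactly as in~\cite{DBLP:journals/lmcs/KleinZ14}.

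First I would fix the precise alphabets and the format of a play: $\SigmaI$ encodes, per letter, one bit of the current $n$-bit block plus a flag marking whether this block is one of Player~$O$'s two claimed occurrences (Player~$O$ actually does the marking, so the marking bits live in $\SigmaO$); $\SigmaO$ carries, in its first $n$ letters, the binary encoding of $i_0$, and thereafter the marking/bookkeeping symbols. Next I would build the \formula-\wDMA $\aut_n$ recognizing $L_n$. Its state space does the following with only linearly many states: (a) a gadget reading off and storing $i_0$ bit by bit, $n$ states as in \cref{ex:gadget}; (b) a cyclic block counter mod $n$ to know where block boundaries are; (c) while scanning each $n$-bit block, a bitwise comparator against $i_0$ producing a verdict in $\{<,=,>\}$ for that block versus $i_0$ — this needs only constantly many states per bit position if we compare most-significant-bit-first, hence $\bigo(n)$ states overall; (d) a constant number of states recording which of the two ``claimed'' blocks we have already seen. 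The Emerson-Lei acceptance formula $\varphi_n$ then asserts: the run visited the two states certifying ``first claimed block equals $i_0$'' and ``second claimed block equals $i_0$,'' and did \emph{not} visit any state certifying ``a block between the two claims had a bit making it $\ge i_0$,'' and a sink/error state was never visited. Because the weak Muller condition looks at the occurrence set, a single visit to a ``$\ge i_0$ between the claims'' state permanently records a violation, which is exactly the property the introduction highlights — this is what makes the small construction possible. The formula has size $\bigo(n)$ since it is a conjunction of $\bigo(n)$ literals over the $\bigo(n)$ relevant states.

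Then I would argue the two bullet points. For the first, Player~$O$ wins $\delaygame{L_n}$ for a suitable constant $f$: with $f(0)$ large enough (at least $2^{2^n}$ plus a few) she sees a prefix of Player~$I$'s value stream that, by \cref{prop:jpair}, is guaranteed to contain a bad $j$-pair for some $j\in[0,2^n-1]$; she sets $i_0:=j$, outputs its bits in the first $n$ rounds, and later marks the two block positions witnessing the bad $j$-pair, winning every consistent play. For the second, suppose toward a contradiction that Player~$O$ wins $\delaygamep{L_n}$ for some $g$ with $\sum_{i=0}^{n-1}g(i)\le 2^{2^n}$. By round $n-1$ Player~$O$ has committed to all $n$ bits of $i_0$ while having seen only $\sum_{i=0}^{n-1}g(i)\le 2^{2^n}$ input letters, i.e.\ at most $2^{2^n}/n < 2^{2^n}$ complete $n$-bit blocks — by the second part of \cref{prop:jpair} (the length-$(2^{2^n}-1)$ witness over $\Sigma_{2^n}$) there is a value stream of this length avoiding a bad $j$-pair for every $j$, and more importantly Player~$I$, knowing $i_0$ only after Player~$O$ is committed, can play the Klein–Zimmermann diagonalizing strategy: continue the input so that there is provably no bad $i_0$-pair anywhere in the whole stream (always keep feeding values $>i_0$, or avoid ever repeating $i_0$ without a smaller value in between), so no marking by Player~$O$ can satisfy $\varphi_n$, and Player~$I$ wins. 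This contradicts Player~$O$ having a winning strategy.

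The main obstacle I anticipate is the quantitative bookkeeping in the lower-bound direction: matching the counting of letters-seen-before-commitment against the $2^{2^n}$ threshold so that the bound is exactly $\sum_{i=0}^{n-1}g(i)\le 2^{2^n}$ rather than something off by a factor of $n$ or an additive slack, and handling the possibility that Player~$O$ delays committing to $i_0$ by spreading it over more than $n$ rounds (the condition must be designed so that ``the first $n$ output letters encode $i_0$'' is enforced, or so that any later commitment only helps Player~$I$). A secondary, more routine obstacle is verifying that the bitwise most-significant-bit-first comparison of each block against the stored $i_0$ genuinely needs only $\bigo(n)$ states and that the Emerson-Lei formula correctly captures ``equality on the two marked blocks and strict decrease in between'' in terms of \emph{occurrence} rather than infinity sets — including making sure the run eventually loops in a harmless sink so that $\occ(\rho)$ is well-defined and the formula's truth value is the intended one.
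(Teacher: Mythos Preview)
Your high-level strategy is right and matches the paper's: implement the bad $j$-pair game with $j\in[0,2^n-1]$ encoded in binary, have Player~$O$ commit to $i_0$ in the first $n$ rounds and mark two claimed occurrences, and invoke \cref{prop:jpair} for both bounds. The upper and lower bound arguments you sketch are essentially the paper's.

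The gap is step~(c), the ``bitwise comparator against $i_0$.'' After the first $n$ rounds, $i_0$ is \emph{not} in the automaton's current state: your storage gadget from \cref{ex:gadget} records $i_0$ only in the occurrence set (the path of states visited), while the current state at the end of the gadget carries no information about $i_0$ whatsoever---that is the whole point of the gadget being of size~$\bigo(n)$ rather than~$2^n$. So when the automaton later scans an $n$-bit block $x_i$, it sees $x_i(k)$ in the input letter but has no access to $i_0(k)$, neither in the state nor in the letter. An MSB-first comparison needs both bits simultaneously; you cannot produce a $\{<,=,>\}$ verdict per block with $\bigo(n)$ states. Moreover, you cannot push the comparison into the occurrence set either, because the ``between the marks'' gadget is reused for unboundedly many blocks, so its occurrence set mixes bits from all of them.

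The paper's fix is to make Player~$O$ do the work the automaton cannot: instead of outputting only $i_0$ once, Player~$O$ outputs a fresh $n$-bit number~$y_i$ alongside every input block~$x_i$, and the acceptance condition requires $x_i+y_i=y_0$ for all $i$ strictly between the two marks (equivalently $x_i\le y_0$, which by the usual argument is as good as $<$). Now the automaton only has to \emph{add} $x_i$ and $y_i$---both available in the current letters---using a reusable $\bigo(n)$-state addition gadget (\cref{fig:gadget-add}) whose visited states record, for each bit position~$k$, which result bit was ever produced. The Emerson-Lei formula then compares these result bits against the $y_0$ bits stored in the $S$-gadget, and separately compares the bits stored in the two ``marked block'' gadgets~$M_1,M_2$ against $y_0$ for the equalities. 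This replacement of ``compare to a stored value'' by ``add two currently available values and compare results via the occurrence set'' is the missing idea.
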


\begin{proof}
  Pick some $n \in \nats_{\geq 1}$.
  Our goal is to give a language~$L_n$ recognized by an \formula-\wDMA~$\aut_n$ of size~$\bigo(n)$ such that the input component encodes a sequence~\mbox{$x_0, x_1, x_2,\ldots$} of numbers~$x_i \in [0,2^n-1]$, and the output component encodes a sequence~\mbox{$y_0,y_1,y_2,\ldots$} of numbers~$y_i \in [0,2^n-1]$. Each $x_i$ and $y_i$ is encoded as an $n$-bit sequence. For acceptance, we require that $x_1x_2\cdots$ contains a bad $y_0$-pair.
  According to \cref{prop:jpair}, finding a bad $j$-pair in a word over $[0,2^n-1]$ requires in the worst case a word of length~$2^{2^n}$.
  Consequently, using $L_n$ as the winning condition for a delay game will ensure that a doubly-exponential lookahead (in $n$) is necessary.
  
  We go into details.
  The automaton we construct uses the product alphabet~$\set{0,1,\#} \times \set{0,1,\yes,\#}$.
  We first describe what we call valid input resp.\ output encodings:
  \begin{itemize}
    \item An input sequence $\alpha$ is a valid encoding if it is of the form 
    \[
        ((0+1)^n\#)^\omega.
    \]
    The $i$-th $n$-bit sequence encodes the number $x_i \in [0,2^n-~1]$, the least significant bit is the leftmost one.
    \item An output sequence $\beta$ is a valid encoding if it is of the form 
    \[
        ((0+1)^n(\yes+\#))^\omega
    \]
    such that $\yes$ occurs exactly twice.
    The $i$-th $n$-bit sequence  encodes the number $y_i \in [0,2^n-~1]$, the least significant bit is the leftmost one.
  \end{itemize}

The behavior of the automaton is described below, where we distinguish three cases.
The first two cases simply state that (viewed as a game) the player who first violates the correct encoding format loses.
\begin{enumerate}
  \item\label[item]{item:i1} If there exists some $i$ such that $\alpha(0)\alpha(1)\cdots\alpha(i)$ cannot be completed to a valid input encoding, but $\beta(0)\beta(1)\cdots\beta(i-1)$ can be completed to a valid output encoding, the automaton accepts.
  \item\label[item]{item:i2} If there exists some $i$ such that $\beta(0)\beta(1)\cdots\beta(i)$ cannot be completed to a valid output encoding, but $\alpha(0)\alpha(1)\cdots\alpha(i)$ can be completed to a valid input encoding, the automaton rejects.
  \item\label[item]{item:i3} If $\alpha$ and $\beta$ are valid encodings, the automaton accepts if a bad $y_0$-pair is correctly marked.
\end{enumerate}

%%%
\begin{figure}[t]
  \hspace*{\fill}
  \begin{tikzpicture}[thick,scale=0.8]
    \tikzstyle{every state}+=[inner sep=1pt, minimum size=2em];

    \node[state, initial] (s) {};

    \node[state, rectangle, rounded corners, right of = s, yshift = 6em] (0a) {\res{0}{0}{0}};
    \node[state, rectangle, rounded corners, right of = s, yshift = 3em] (0b) {\res{0}{0}{1}};
    \node[state, rectangle, rounded corners, right of = s, yshift = -3em] (0c) {\res{0}{1}{0}};
    \node[draw=gray,state, rectangle, rounded corners, right of = s, yshift = -6em] (0d) {\textcolor{gray}{\res{0}{1}{1}}};

    \node[state, rectangle, rounded corners, right of = 0a] (1a) {\res{1}{0}{0}};
    \node[state, rectangle, rounded corners, right of = 0b] (1b) {\res{1}{0}{1}};
    \node[state, rectangle, rounded corners, right of = 0c] (1c) {\res{1}{1}{0}};
    \node[state, rectangle, rounded corners, right of = 0d] (1d) {\res{1}{1}{1}};

    \node[draw = none, right of = s, xshift = 10em] (m) {$\cdots$};

    \node[state, rectangle, rounded corners, right of = 0a, xshift = 10em] (na) {\res{n-1}{0}{0}};
    \node[state, rectangle, rounded corners, right of = 0b, xshift = 10em] (nb) {\res{n-1}{0}{1}};
    \node[state, rectangle, rounded corners, right of = 0c, xshift = 10em] (nc) {\res{n-1}{1}{0}};
    \node[state, rectangle, rounded corners, right of = 0d, xshift = 10em] (nd) {\res{n-1}{1}{1}};

    \draw[->] (s) edge[auto] node {$\scriptstyle\binom{0}{0}$} (0a);
    \draw[->] (s) edge[auto, near end, swap] node {$\scriptstyle\binom{0}{1},\binom{1}{0}$} (0c);
    \draw[->] (s) edge[auto, near end, swap] node {$\scriptstyle\binom{1}{1}$} (0b);

    \draw[->] (0b) edge[auto] node {$\scriptstyle\binom{0}{0}$} (1c);
    \draw[->] (0b) edge[auto, swap] node[yshift=1.5em] {$\scriptstyle\binom{0}{1},\binom{1}{0},\binom{1}{1}$} (1d);

    \draw[->,gray,dashed] (0a) edge[auto] node {} ($(0a)+(4em,0em)$);
    \draw[->,gray,dashed] (0c) edge[auto] node {} ($(0c)+(4em,0em)$);
    \draw[->,gray,dashed] (0d) edge[auto] node {} ($(0d)+(4em,0em)$);

    \draw[->,gray,dashed] (1a) edge[auto] node {} ($(1a)+(4em,0em)$);
    \draw[->,gray,dashed] (1b) edge[auto] node {} ($(1b)+(4em,0em)$);
    \draw[->,gray,dashed] (1c) edge[auto] node {} ($(1c)+(4em,0em)$);
    \draw[->,gray,dashed] (1d) edge[auto] node {} ($(1d)+(4em,0em)$);

  \end{tikzpicture}
  \hspace*{\fill}
  \caption{A gadget to add two $n$-bit sequences in little endian notation.
  Missing transitions are implied by gray dashed arrows. States for positions 2 to $n-2$ are not drawn.
  The state label $\res{k}{b}{b'}$ indicates that adding the bits at position $k$ yields the result $b$ and $b'$ is the carry bit relevant for the addition at position $k+1$.
  The state $\res{0}{1}{1}$ is not reachable and can be omitted.}
  \label{fig:gadget-add}
\end{figure}
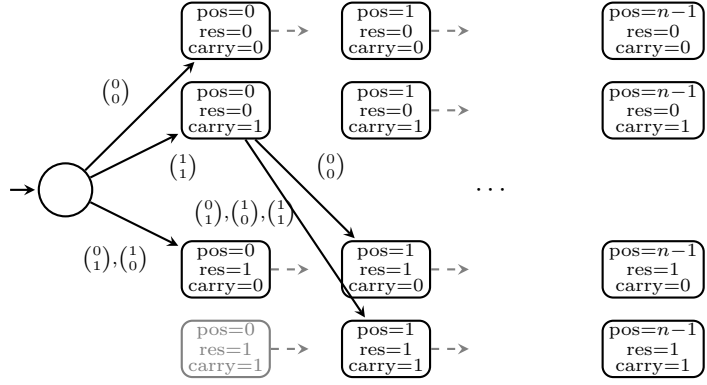
%%%

We first explain the idea of \cref{item:i3}.
The intention of the marker $\yes$ is to be placed before two numbers $x_j$ and $x_{j'}$ such that $x_j = x_{j'} = y_0$ and they enclose a bad $y_0$-pair, i.e., $x_i < y_0$ for $j < i < j'$.

In order to check whether $x_j = x_{j'} = y_0$ it suffices to have three gadgets similar as in \cref{ex:gadget} for $n$-bit sequences.
The first gadget stores $y_0$, we index its states with $S$ for start,
The second and third gadgets are entered when the first resp.\ second $\yes$ is seen, and thus store $x_j$ and $x_{j'}$, we index its states with $M_1$ resp.\ $M_2$ for marked numbers.
Validating whether $x_j = x_{j'} = y_0$ is done via the acceptance condition by a formula similar to the formula presented in \cref{ex:gadgetdelay}, which we present further below.
Our formula also enforces that there are two marked numbers.

Instead of checking whether $x_i < y_0$ for $j < i < j'$, the automaton checks for another condition which is expressible with a small enough automaton and formula.
Namely, whether $x_i + y_i = y_0$ for $j < i < j'$.
Note that this is only equivalent to $x_i \le y_0$. However, recall \cref{footnote:jpairs}.
The intention is that, during a play, Player~$O$ can make this condition true if, and only if, $x_j$ and $x_{j'}$ indeed enclose a bad $y_0$-pair.

In order to check whether $x_i + y_i = y_0$ for all $j < i < j'$, one gadget is used that computes the addition of two $n$-bit sequences.
Such a gadget is depicted in \cref{fig:gadget-add}. We index its states by $A$ for addition.
This gadget is entered for every $n$-bit sequence between the first two marked sequences.
Expressing that the automaton accepts if $x_i + y_i = y_0$ for all $j < i < j'$ is expressed via a formula which we give further below.

Now, we briefly explain the overall automaton structure.
We construct a deterministic automaton, thus, \cref{item:i1,item:i2,item:i3} must be handled in parallel.
Note that the gadgets we introduced for \cref{item:i3} can easily be extended to check whether either sequence is invalid (i.e., whether \cref{item:i1} or \cref{item:i2} holds) without introducing new states for the gadgets.
The automaton needs two additional sink states $q_{\mathit{acc}}$ resp.\ $q_{\mathit{rej}}$ that are entered when the input resp.\ output sequence is not a valid encoding.

After the $S$-gadget and before the $M_1$-gadget, and after the $M_2$-gadget, it suffices to check whether \cref{item:i1} or \cref{item:i2} holds and go to the corresponding sink state.
This can be done with $\bigo(n)$ many states.

All in all, $\aut_n$ is a sequence of six different gadgets which are all of size $\bigo(n)$, so $\aut_n$ is of size $\bigo(n)$:
\begin{enumerate}
    \item The first gadget, $S$, stores~$y_0$ the first number picked by Player~$O$.
    \item The second gadget checks that the format of the input and output is correct.
    \item The third gadget, $M_1$, is entered when the first~$\yes$ occurs, say at the beginning of the $j$-th block. It stores $x_j$.
    \item The fourth gadget, $A$, is entered immediately after the $j$-th block and only left when the second~$\yes$ occurs, say at the beginning of the $j'$-th block. It checks that $x_i + y_i = y_0$ holds for all $j < i < j'$.
    \item The fifth gadget, $M_2$, is entered when the second~$\yes$ occurs. It stores $x_{j'}$, the number marked by the second $\yes$.
    \item The last gadget again checks that the format of the input and output is correct.
\end{enumerate}
Note that the gadgets~$S$, $M_1$, $M_2$, and $A$ also check the format of the input and output is correct. 
As explained earlier, this does not increase their size. 

Finally, we are ready to present the acceptance condition $\varphi_n$ of the automaton $\aut_n$.
We state its intended meaning in words first:
\begin{itemize}
  \item the rejecting sink is not seen, and
  \item the accepting sink is seen, or 
  \item for every position $k \in [0,n-1]$ and every bit value $b \in \bool$: 
  \begin{itemize}
    \item If $y_0(k) = b$, then $x_j(k) = b$ and $x_{j'}(k) = b$, i.e., the two marked numbers are equal to the first one in the output.
    \item If $y_0(k) = b$, then the $k$-th bit of $x_i + y_i $ is not $1-b$ for every $j < i < j'$, i.e., all pairs of numbers between the marked ones add up to the first one in the output.
  \end{itemize}
\end{itemize}

We define $\varphi_n = \neg q_{\mathit{rej}} \wedge (q_{\mathit{acc}} \vee \psi_n)$ with
\[
\psi_n =   
    \bigwedge_{k \in [0,n-1]} \bigwedge_{b \in \bool} \zero{k}{b} \rightarrow \left[\one{k}{b} \wedge \two{k}{b} \wedge \bigwedge_{b' \in \bool}\neg \add{k}{1-b}{b'}\right].
\]
Note that $\varphi_n$ is of size $\bigo(n)$.

Now, we show that 
\begin{itemize}
  \item Player~$O$ wins $\delaygame{L_n}$ for some constant delay function $f$, but
  \item Player~$I$ wins $\delaygamep{L_n}$ for every delay function $g$ with $\sum_{i = 0}^{n-1} g(i) \leq 2^{2^n}$.
\end{itemize}

We start by giving a constant delay function $f$ such that Player~$O$ has a winning strategy in $\delaygame{L(\aut_n)}$.
According to \cref{prop:jpair}, the sequence~$x_1\cdots x_{2^{2^n}}$ of numbers is guaranteed to contain a bad $j$-pair for some $j \in [0,2^n~-~1]$.
Hence, we let $f(0) = m$, where $m$ is some number large enough to ensure that $x_1,\ldots,x_{2^{2^n}}$ is included in the lookahead (assuming that Player~$I$ plays a valid input encoding).
If Player~$I$ plays a valid input encoding, Player~$O$'s strategy is to output some $y_0$ such that $x_1\cdots x_{2^{2^n}}$ contains a bad $y_0$-pair.
She marks the pair correctly with $\yes$, and can ensure that she produces output $y_i$ such that $x_i + y_i = y_0$ in between the marked numbers. At other positions, she plays arbitrary numbers, and no other marks.

We argue that $\varphi_n$ is satisfied.
The state $q_{\mathit{rej}}$ is never reached, as
if Player~$I$ does not play a valid input encoding, the state $q_{\mathit{acc}}$ is reached independently of Player~$O$'s moves and the formula is trivially satisfied.
Assume that Player~$I$ plays a valid input encoding, and Player~$O$ behaves as described.
Clearly, $\zero{k}{b} \rightarrow (\one{k}{b} \wedge \two{k}{b})$ for all $k \in [0,n-1]$ and $b \in \bool$.
Since for all numbers $y_i$ between the marked numbers $x_i + y_i = y_0$ holds, we also have that if $\zero{k}{b}$ has been seen, then the $k$-th bit of $x_i + y_i$ is equal to $b$.
Hence, only the states $\add{k}{b}{0}$ and $\add{k}{b}{1}$ can be seen.
This implies that $\neg \add{k}{1-b}{0} \wedge \neg \add{k}{1-b}{1}$ is true.

For the other direction, assume that $g$ is a delay function such that $\sum_{i = 0}^{n-1} g(i) \leq 2^{2^n}$.
We show that Player~$I$ wins $\delaygamep{L_n}$.
According to \cref{prop:jpair}, there exists a word $w$ over $[0,2^{n}-1]$ with $|w|=2^{2^n}-1$ that contains no bad $j$-pair for every $j \in [0,2^{n}-1]$.
Player~$I$'s strategy is to play a valid input encoding such that $x_i$ is the $n$-bit sequence that encodes~$w(i)$ for all $0 \leq i \leq 2^{2^n}-1$. 
After round $n-1$, Player~$O$ has finished $y_0$.
Furthermore, the condition $\sum_{i = 0}^{n-1} g(i) \leq 2^{2^n}$ ensures that Player~$I$ has not yet begun to spell the $n$-bit sequence that encodes $x_{2^{2^n}}$.
Hence, since $y_0$ is known to Player~$I$, he can pick a number $x \neq y_0$ and continues to produce a valid input encoding such that every $x_i$ encodes $x$ for all $i \geq 2^{2^n}$.

We show that Player~$I$ wins playing as described.
Player~$O$'s goal is to satisfy the formula~$\varphi_n$.
Clearly, she loses if her moves do not produce a valid output encoding (as $q_{\mathit{rej}}$ would be reached), so we assume that she produces a valid output encoding.
The state $q_{\mathit{acc}}$ is not reached, as it is only reached if the input sequence is invalid, hence, the other part of the formula must be satisfied.
Note that the implication $\zero{k}{b} \rightarrow (\one{k}{b} \wedge \two{k}{b})$ for all $k \in [0,n-1]$ and $b \in \bool$ can only be satisfied if Player~$O$ marks two numbers with $\yes$, because the $S$-gadget stores $y_0$ making the lefthand-side of the implication true.
However, since Player~$I$'s moves ensured that the input sequence does not contain a bad $y_0$-pair, the righthand-side of the implication 
\[
  \bigwedge_{k \in [0,n-1]} \bigwedge_{b \in \bool} \zero{k}{b} \rightarrow \left[\one{k}{b} \wedge \two{k}{b} \wedge \bigwedge_{b' \in \bool}\neg \add{k}{1-b}{b'}\right]
\]
is not satisfied.
We have shown that Player~$I$ wins $\delaygamep{L(\aut_n)}$. 
\end{proof}

\paragraph{Complexity.}

Next, we settle the complexity of solving delay games with \formula-\wDMA winning conditions by proving a \twoexp lower bound.
Our construction is a generalization of the \exptime lower bound for solving delay games with winning conditions given by deterministic parity automata~\cite{DBLP:journals/lmcs/KleinZ14}.

Intuitively, Player~$I$ produces a sequence of configurations of an alternating exponential-space Turing machine. 
Each of these is of exponential length, i.e., the cells of each configuration can be addressed with polynomially many bits.
Now, Player~$I$ picks the successor configuration of universal configurations while Player~$O$ picks the successor configuration of existential configurations (by picking a transition to apply).
Relying on the lookahead, Player~$O$ has always access to the full current configuration picked by Player~$I$ before she has to pick a transition to apply.
Dually, to account for the lookahead, Player~$I$ is allowed to copy a configuration to fill the lookahead while he waits for Player~$O$ to pick a transition.
Hence, two successive configurations played by Player~$I$ should either be equal or the second one should be a successor configuration of the first one. 

To force Player~$I$ to faithfully simulate the Turing machine, i.e., to only copy configurations or to pick a valid successor configuration (in particular the one determined by Player~$O$ in case of existential configurations), Player~$O$ can mark cells for the automaton to check for correctness: This is sufficient as an error by Player~$I$ manifests itself in a single wrongly updated or copied cell. 
Using the addresses of the cells and the markers used by Player~$O$, a small automaton can check whether there is an error or not.
Finally, Player~$O$ is able to correctly apply the markers, as she has enough lookahead to always observe the next two configurations, enough to spot errors introduced by Player~$I$.

Altogether, we obtain a delay game simulating an alternating exponential-space Turing machine, and \aexpspace $=$ \twoexp yields the desired result.

\begin{theorem}\label{thm:detmullerComplete}
  Solving delay games with \formula-\wDMA winning conditions is \twoexp-complete.
\end{theorem}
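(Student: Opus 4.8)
The upper bound is Item~1 of \cref{thm:upperbound}, so only the matching \twoexp lower bound remains; since $\aexpspace = \twoexp$ it suffices to reduce the acceptance problem for alternating Turing machines with space bound $2^{p(n)}$ (for a fixed polynomial $p$), which is $\aexpspace$-complete, to the problem of deciding whether Player~$O$ wins a delay game with an \formula-\wDMA winning condition. Thus, given such a machine $\TM$ and an input $w$ of length $n$, I would construct in polynomial time a constant delay function $f$ and an \formula-\wDMA $\aut$ of size polynomial in $\size{\TM} + n$ so that $\TM$ accepts $w$ if, and only if, Player~$O$ wins $\delaygame{L(\aut)}$. This lifts the \exptime lower bound of Klein and Zimmermann~\cite{DBLP:journals/lmcs/KleinZ14} (which simulates an \apspace machine with a deterministic safety/parity automaton) by one exponential, using the succinctness of Emerson-Lei conditions in exactly the way the proof of \cref{thm:detmullerLowerboundLookahead} used it.

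\emph{Encoding and game.} In the input component Player~$I$ writes a sequence of configurations of $\TM$, each a word of length $2^{p(n)}$ whose cells carry a tape symbol, an optional control state, and the $p(n)$-bit binary address of the cell, with a separator marker between blocks. Player~$O$ uses the output component for two things: at the start of each existential configuration she outputs the $\TM$-transition to be applied next, and at any point she may drop a marker (a $\yes$-style symbol, as in \cref{thm:detmullerLowerboundLookahead}) on a cell she claims witnesses that Player~$I$ cheated, i.e.\ that two consecutive configurations of Player~$I$ are neither equal nor related by the transition rule dictated by the pending choices. The automaton $\aut$ runs three checks in parallel through disjoint gadgets, each of size $\bigo(p(n))$, just as $\aut_n$ in \cref{thm:detmullerLowerboundLookahead} was a chain of $\bigo(n)$-sized gadgets: (i) format checking, where the player who first deviates from the prescribed encoding loses, handled by dedicated sink states exactly as in \cref{item:i1,item:i2} there; (ii) error checking, where upon reading a marked cell the gadgets store the $p(n)$-bit address of that cell together with the three-cell window around the matching cell in the preceding and in the current configuration, and an Emerson-Lei formula $\varphi$ of size $\bigo(p(n))$, built like $\psi_n$ from atoms of the shape $\zero{k}{b}$, $\one{k}{b}$, $\add{k}{b}{b'}$, asserts that these windows are locally inconsistent, so that a truthful mark makes Player~$O$ win; and (iii) acceptance checking, where if no mark is ever dropped the formula demands that the simulated run reach an accepting configuration of $\TM$. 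Only an Emerson-Lei condition, not an explicit one, allows comparing the exponentially many cell addresses against a polynomially sized object, so this is where succinctness is indispensable.

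\emph{Correctness.} If $\TM$ accepts $w$, fix an accepting run tree. Player~$O$ wins as follows: her lookahead lets her see the complete current configuration of Player~$I$ before she must commit; she answers each existential configuration with the transition prescribed by the run tree, and she drops no mark unless Player~$I$ deviates from a faithful move (a copy, or the successor fixed by the universal/existential choices), in which case the deviation is localized to a constant-sized window that she spots within her lookahead and marks, winning by check~(ii). If $\TM$ rejects $w$, Player~$I$ wins by simulating $\TM$ faithfully, resolving universal branches according to a strategy witnessing rejection and copying a configuration only to fill the lookahead; faithfulness guarantees that Player~$O$ can never truthfully mark a cell (a false mark is refuted by $\varphi$), and the faithful run never reaches an accepting configuration, so check~(iii) fails. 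Finally one checks the lookahead budget: $f(0)$ must be large enough for Player~$O$ to always see the next two full configurations, hence $f(0)$ is doubly-exponential in $n$, which is a legitimate constant delay function and does not affect the polynomial size of $\aut$.

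\emph{Main obstacle.} The delicate point is the simultaneous correctness of the three parallel gadgets under one polynomial-size \formula-\wDMA condition. Since a weak Muller condition only inspects the (monotone) occurrence set, the gadgets must be arranged so that the address- and window-storing states are visited exactly once and at the right moment, and so that the \myquot{error found}, \myquot{accepting configuration reached}, and \myquot{format violated} flags, once triggered, stay in the occurrence set forever — mirroring the roles of $q_{\mathit{acc}}$, $q_{\mathit{rej}}$, and the gadget states in \cref{thm:detmullerLowerboundLookahead}. Proving that a cheat of Player~$I$ necessarily manifests in a single cell (so that one window genuinely suffices for a verdict), and that Player~$O$'s lookahead always permits her to act on it, is the technical heart of the argument.
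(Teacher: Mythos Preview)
Your proposal is essentially the paper's own proof: a reduction from \aexpspace in which Player~$I$ spells exponentially long configurations with $p(n)$-bit cell addresses, Player~$O$ picks the existential transitions and marks a cell witnessing a local copy/update error, and an Emerson-Lei formula of size $\bigo(p(n))$ compares the stored addresses (the paper additionally has a separate marker for address-counter errors and a $C/N$ delimiter for copied vs.\ new configurations, but these are details you would add when writing it out). One minor slip: the lookahead~$f(0)$ needed to see one or two full configurations is only singly exponential in~$n$ (a configuration has length about $2^{p(n)}\cdot(p(n)+1)$), not doubly-exponential, and in any case $f$ is not part of the reduction's output since the decision problem quantifies existentially over delay functions.
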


\begin{proof}
  The upper bound was shown in \cref{thm:upperbound}, we show the corresponding lower bound.

  We give a reduction from the word problem for alternating exponential space Turing machines.
  Since $\text{\twoexp} = \text{\aexpspace}$ \cite{ch1981st}, we obtain the desired result.

  Let $\TM = (Q = Q_\exists \uplus Q_\forall,\Sigma,q_\initmark,\Delta,q_A,q_R)$ be an alternating exponential space Turing machine, where $\Delta \subseteq Q \times \Sigma \times \Sigma \times Q \times \set{-1,0,1}$.
  Furthermore, let $p$ be a polynomial such that $2^{p}$ bounds the space consumption of $\TM$ and let $w \in \Sigma^*$ be an input word.
  We fix $n = p(|w|)$.
  Wlog., we assume that the accepting state $q_A$ and the rejecting state $q_R$ are equipped with a self-loop.
  Furthermore, wlog., we assume that either $q_A$ or $q_R$ is reached in every run of $\TM$ on $w$.
  An alternating exponential space Turing machine that does not have this property can be turned into an equivalent TM $\mathcal M'$ that satisfies the property as follows:
  The number of configurations that can be reached on $w$ is bounded (doubly-exponentially in~$n$).
  Hence, $\TM$ can be equipped with an exponentially-sized counter in order to detect that a configuration repetition has occurred (when the counter has surpassed the maximum) and reject if so.
  Then, $\mathcal M'$ is used as the starting point for the reduction instead of $\TM$.

  We construct an \formula-\wDMA $\aut$ of polynomial size in $(|\Delta| + n)$ such that $\TM$ accepts~$w$ if, and only if, Player~$O$ wins $\delaygame{L(\aut)}$ for some delay function~$f$.

  The idea is that, in the delay game, the players build a run of $\TM$ on $w$ in the form of a configuration sequence.
  Player~$I$ controls the universal states and Player~$O$ the existential ones.
  Furthermore, Player~$I$ spells all configurations with his moves.
  The configurations are of exponential size in $n$, so each cell of $\TM$'s tape is addressed by an $n$-bit counter.
  In between configurations there is a delimiter, either $C$ or $N$, to denote if the next configuration is a \emph{copy} of the previous one or a \emph{new} one.
  The need to copy configurations arises since, in the delay game, Player~$O$ is behind with her moves, so Player~$I$ needs to wait for Player~$O$'s pick if the current configuration is existential.

  We go into details.
  The automaton we construct uses the product alphabet 
  \[
  \left(\set{0,1,C,N,\text{\textvisiblespace}} \cup Q \cup \Sigma\right) \times \left(\set{\ok,\nok,\no} \cup \Delta \right).
  \]
  Recall that $Q,\Delta$, and $\Sigma$ are components of the Turing machine $\TM$ and are used to encode configurations and to pick transitions to apply.
 We first describe what we call valid input resp.\ output formats:
  \begin{itemize}
    \item An input sequence $\alpha$ is valid if it is of the form 
    \[
        \left(\left(C+N\right)\big[\left(0+1\right)^n\left(\Sigma + Q + \text{\textvisiblespace}\right)\big]^+\right)^\omega,
    \]
    to be interpreted as a sequence of configurations $(C+N)c_0(C+N)c_1\cdots$ where each $c_i$ is a configuration delimited by $C$ or $N$.
    A configuration is encoded by a sequence $\big[(0+1)^n(\Sigma + Q + \text{\textvisiblespace})\big]^+$, where each $n$-bit sequence encodes an address $a \in [0,2^n-1]$, the least significant bit is the rightmost one, and the letter after the address encoding is either the tape cell content (that is, a letter $\sigma \in \Sigma$, or a blank (denoted as \textvisiblespace)) or a state $q \in Q$.
    Note that the format does not imply that a sequence in $\big[(0+1)^n(\Sigma + Q + \text{\textvisiblespace})\big]^+$ encodes a configuration of length~$2^n$. 
    This will be later enforced by the rules of the game. 

    \item An output sequence $\beta$ is valid format if it is of the form 
    \[
      \left(\Delta\big[(\ok + \nok + \no)^{n+1}\big]^+\right)^\omega,
    \]
    and if for every position $i$, $\alpha(i) \in \set{C,N}$ if, and only if, $\beta(i) \in \Delta$, i.e., the letters~$\tau \in \Delta$ occur at the same positions as the configuration delimiters $C$ and $N$.
    We give the intended interpretation of this format, which is explained in more detail below:
    The markers $\ok$ resp.\ $\nok$ and $\no$ are intended to indicate the absence of resp.\ the existence of an error.
    If $\nok$ marks a bit in an address block, it should indicate an error in updating the address block at this bit, if $\no$ marks the first bit of an address block, then it should indicate that there is a copy or update mistake in the successive configuration which manifests itself at the cell with the marked address.
  \end{itemize}

The behavior of the automaton is as follows:
\begin{enumerate}
  \item\label[item]{item:j1} If there exists some $i$ such that $\alpha(0)\alpha(1)\cdots\alpha(i)$ cannot be completed to have a valid format, but $\beta(0)\beta(1)\cdots\beta(i-1)$ can be completed to have a valid format, the automaton accepts.
  \item\label[item]{item:j2} If a configuration encoding $c$ contains not exactly one state or the first address block is not zero, the automaton accepts.
  \item\label[item]{item:j3} If $c_0$ does not encode the initial configuration on $w$, it accepts.
  \item\label[item]{item:j4} If there exists some $i$ such that $\beta(0)\beta(1)\cdots\beta(i)$ cannot be completed to have a valid format, but $\alpha(0)\alpha(1)\cdots\alpha(i)$ can be completed to have a valid format, the automaton rejects.
  %\item\label[item]{item:j5} If there is a position $i$ such that , the automaton rejects.
\end{enumerate} 
  We refer to \cref{item:j1,item:j2,item:j3} as simple input errors, and to \cref{item:j4} as simple output error.
  These conditions can be easily checked using the automaton structure with $\bigo(|\Delta| + n)$ states.
  We use designated sink states $q_{\mathit{acc}}$ and~$q_{\mathit{rej}}$.
\begin{enumerate}\setcounter{enumi}{4}
  \item\label[item]{item:j6} If there are no simple errors, and there is a position marked by $\nok$ in some address block $a_i$ (say $a_i(k)$), the automaton checks if $a_{i+1}(k)$ (assuming there is no $a_i(k')$ for some $k' < k$ which is also marked by $\nok$) is wrongly updated in the next address block $a_{i+1}$.
  We call this an \emph{address update error}.
  If the address is indeed wrongly updated, then the automaton accepts, otherwise it rejects. We explain further below how to achieve this behavior.
  \item\label[item]{item:j7} If there are no simple errors, and there is an address block $a$ whose first position is marked by $\no$ the following is checked:
  \begin{itemize}
    \item The next configuration must contain an address block $a'$ whose first position is marked by $\no$, otherwise the automaton rejects.
    (This is easily checkable via the automaton structure.)
    \item If $a \neq a'$, the automaton rejects.
    \item If $a = a'$, check whether the tape cell addressed by $a$ contains an error:
    \begin{itemize}
      \item If the delimiter between the configurations is $C$: The cell content is not correctly copied. We call this \emph{copy error}.
      \item If the delimiter between the configurations is $N$: The cell content is not correctly updated according to the picked transitions. We call this \emph{update error}.
    \end{itemize} 
    If an error is claimed correctly, the automaton accepts, otherwise it rejects.
  \end{itemize}
  We explain further below how to achieve this behavior.
  \item\label[item]{item:j8} Lastly, if none of the above cases occurs, if there is a configuration that contains $q_A$, then the automaton accepts, if there is a configuration that contains $q_R$, then the automaton rejects.
  Therefore, when a rejecting resp.\ accepting configuration has been seen, the automaton goes into designated new sink states $q_R$ resp.~$q_A$.
\end{enumerate}

Now, we explain how to handle \cref{item:j6}, i.e., the detection of an address update error.
When a bit is marked with $\nok$, its value is stored in the state space.
Using a modulo counter, it is easy to find the same bit in the next address block.
It is not difficult to see that the value of a bit only flips if all remaining bits (that is, all bits to the right) are one, because then adding one to the address counter causes an overflow of all less significant bits (which are denoted to the right of the marked bit).
Hence, an automaton can check with $\bigo(n)$ states whether a marked address bit has been updated correctly.

Regarding \cref{item:j7}, checking whether the two marked (with $\no$) $n$-bit sequences are the same can be done by using two gadgets and a formula as explained in \cref{ex:gadget,ex:gadgetdelay}.
Such a formula is of size~$\bigo(n)$.

In order to check whether there is a copy error between successive configurations it suffices to remember whether the delimiter between them is $C$ and the cell content of the marked cell.

Checking whether there is an update error between successive configurations $c$ and $c'$ is more involved:
\begin{itemize}
  \item The delimiter between them has to be $N$.
  \item A window of three cell contents around the cell whose address is marked must be remembered.
  \item If $c$ is a universal configuration, it has to be verified (using the stored three-cell window) whether the update in $c'$ is achievable by applying a transition from $\Delta$.
  (It is not specified in the input sequence which transition should be applied.)
  \item If $c$ is an existential configuration, the transition to be applied is specified in the output sequence.
  However, the transition to be applied has been given (possibly much) earlier:
  When the delimiter before a configuration is $N$, the configuration is new.
  If this configuration is existential, in the output sequence, the letter after the configuration specifies the transition to be applied.
  \item Transitions that are given after copied configurations as well as universal configurations are irrelevant. 
  \item Thus, the automaton stores a specified transition only if it occurs after a new existential configuration.
  This is the transition to be applied to obtain the next new configuration.
\end{itemize}
An automaton needs a polynomial number of states (in $(|\Delta| + n)$) to store the required pieces of information and verify that there is an update error.

Since we are constructing a \wDMA, all of \cref{item:j1,item:j2,item:j3,item:j4,item:j6,item:j7,item:j8} need to be  checked simultaneously.
Therefore we build a product automaton which we denote by $\aut$ with one special feature, namely, the introduced sink states $q_{acc},q_{rej},q_A,q_R$ are global sink states (as opposed to product states).

This product automaton has a number of states polynomial in $(|\Delta| + n)$ since it is constructed from gadgets whose number of states are polynomial in $(|\Delta| + n)$.
As mentioned before, regarding \cref{item:j7}, the formula needed to verify that two $n$-bit addresses are the same is of size $\bigo(n)$.
Since $\aut$ is a product automaton, the formula describing that two address are equal needs to be adjusted to incorporate the product structure which causes a polynomial in $(|\Delta| + n)$ blow-up of the formula.
We briefly explain how to take a product structure into account by a small example.
Let $\mathfrak C_1,\mathfrak C_2,\mathfrak C_3$ be some gadgets with state sets $Q_{1},Q_{2},Q_{3}$, respectively, and the product automaton is of the form $\mathfrak C_1 \times \mathfrak C_2 \times \mathfrak C_3$.
Say a formula only refers to states in $\mathfrak C_1$ which contains the state $p$.
Then the updated formula must replace all occurrences of $p$ with $\bigvee_{q \in Q_{2}} \bigvee_{r \in Q_{3}} (p,q,r)$.

All in all, the final formula $\varphi$ looks as follows:
\[
  \varphi = \neg (q_\mathit{rej} \vee {q_R}) \wedge ( q_\mathit{acc} \vee {q_A} \vee \varphi_{\mathit{error?}}),
\]
where $\varphi_{\mathit{error?}}$ describes that either the gadgets to store the $n$-bit address sequence have not been entered (meaning no copy/update error needs to be verified), or that they store the same $n$-bit address sequence (the claimed copy/update error is verified via the automaton structure).
The formula $\varphi_{\mathit{error?}}$ is adjusted to the product automaton structure.

Recall that $n = p(|w|)$, hence, $\varphi$ is of polynomial size in $(|\Delta| + p(|w|))$.
All in all, $\aut$ is of polynomial size in $(|\Delta| + p(|w|))$.

It is left to show that $\TM$ accepts $w$ if, and only if, Player~$O$ wins $\delaygame{L(\aut)}$ for some delay function~$f$.

To begin with, assume $\TM$ accepts $w$.
Let $f$ be a constant delay function such that $f(0)~=~m$, where $m$ is chosen such that Player~$O$ has enough lookahead to see a full configuration before she has to start with its output.
To be more concrete, $m = 2^n (1 + n) + 1$ suffices to capture a whole configuration including the address blocks and the delimiter.

We show that Player~$O$ wins $\delaygame{L(\aut)}$.
We assume that either player does not make simple errors (recall that this means playing in the wrong format).
Firstly, assume that Player~$I$ either does not update an address correctly, or introduces a copy or update error.
The lookahead is large enough to correctly claim such an error.
The formula is then satisfied.
Secondly, assume that Player~$I$ does not introduce such errors.
Since $\TM$ accepts $w$, Player~$O$ can pick existential transitions such that if Player~$I$ applies the picked transitions an accepting configuration will be reached.
Note that, in order to pick an existential transition such that an accepting configuration will be reached, it is only necessary to have the knowledge of the current configuration.
This fact can be easily seen when visualizing the run tree of $\TM$ on $w$.
If an accepting configuration is seen, the formula $\varphi$ is clearly satisfied.
Player~$I$ can prevent that an accepting configuration is seen by always copying the current configuration.
But then, the formula $\varphi$ is also satisfied, because the sub-formula $\varphi_{\mathit{error?}}$ is satisfied if no copy/update error is claimed or a copy/update error is correctly claimed.
Player~$O$ does not claim such an error, as we assumed that Player~$I$ does not introduce such an error.

For the other direction, assume $\TM$ rejects $w$.
We show that Player~$I$ wins $\delaygamep{L(\aut)}$ for every delay function~$g$.
Again, we assume that either player does not make simple format errors.
No matter the existential transitions that Player~$O$ choses, Player~$I$ can always pick universal transitions such that eventually a rejecting configuration is reached.
Hence, we assume that Player~$I$ advances the run by giving new configurations according to his and Player~$O$'s choices.

We go into more detail.
Whenever Player~$I$ has provided a new existential configuration, he has to wait for Player~$O$ to provide a transition to be applied since she is behind with her moves.
While Player~$O$ has not provided her transition choice, Player~$I$ copies the current configuration until the choice of Player~$O$ is known.
Then, Player~$I$ produces a new configuration obtained by applying Player~$O$'s choice.
Whenever Player~$I$ has provided a new universal configuration there is no need to copy the configuration, a new configuration can be provided directly after.

Furthermore, we assume that Player~$I$ does not introduce address update errors or configuration copy/update errors as this is not beneficial for him.
Player~$O$ loses a play where a rejecting configuration is reached.
Hence, her only chance is to claim an error. 
Since there are no errors introduced by Player~$I$, this approach is not fruitful as this consequently falsifies $\varphi$.
Player~$O$ cannot not win a play, i.e., Player~$I$ wins $\delaygamep{L(\aut)}$.
\end{proof}

\section{Lower Bounds for Non-Deterministic Weak Muller Delay Games}
\label{sec:nondetmuller}

In this section, we show lower bounds for delay games with non-deterministic weak Muller automata winning conditions.

\paragraph{Lookahead.}

We show a triply-exponential lower bound on the necessary lookahead for Player~$O$ to win, which yields a tight bound in combination with \cref{thm:upperboundLookahead}.

To this end, we again implement the bad $j$-pair game described in \cref{ex:jpair}, this time with numbers in the range~$[0,2^{2^n}-1]$. As argued before, this allows Player~$O$ to win with triply-exponential lookahead, but not with less.
Thus, we show that winning conditions specified by non-deterministic automata allow to implement the bad $j$-pair game with exponentially larger numbers than deterministic automata.
This increase requires some additional mechanism to allow the automaton to check the existence of a bad $j$-pair, which we describe below.

The binary encoding of each number in $[0,2^{2^n}-1]$ is of exponential length, i.e., each bit can be addressed with $n$ additional address bits. 
Player~$I$ produces a sequence of such encodings of numbers and Player~$O$ picks a number~$i_0$ in her first move, claiming that the sequence picked by Player~$I$ contains a bad $i_0$-pair.
To verify this claim, Player~$O$ is required to repeat $i_0$ ad infinitum.

Thus, both players can now cheat during their moves, Player~$I$ by not updating the addresses correctly and Player~$O$ by not copying $i_0$.
Thus, both players need to use markers to claim such errors, unlike in the analogous result for deterministic automata. 

However, due to the lookahead, Player~$I$ cannot mark the position where Player~$O$ has wrongly copied $i_0$. 
However, such an error manifests itself in a single bit and Player~$I$ is able to mark its address at a later position.
As there are exponentially many addresses, we rely on non-determinism to guess for each address whether it is equal to the marked one, i.e., it needs to be checked for a copy error, or whether it is not equal to the marked address.
It is this exponential succinctness, that leads to the increase in required lookahead from doubly-exponential to triply-exponential when allowing non-determinism.

\begin{theorem}\label{thm:nondetmullerLowerboundLookahead}
  For every $n \in \nats_{\geq 1}$, there exists a language $L_n$ recognized by an \formula-\wNMA $\aut_n$ of size $\bigo(n^3)$ such that
  \begin{itemize}
    \item Player~$O$ wins $\delaygame{L_n}$ for some constant delay function $f$, but
    \item Player~$I$ wins $\delaygamep{L_n}$ for every delay function $g$ with $\sum_{i = 0}^{(n+2) \cdot 2^n} g(i) \leq 2^{2^{2^n}}$.
  \end{itemize}
\end{theorem}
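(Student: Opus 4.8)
The plan is to mimic the deterministic construction from \cref{thm:detmullerLowerboundLookahead}, but lift the range of numbers from $[0,2^n-1]$ to $[0,2^{2^n}-1]$ by encoding each such number as a sequence of $2^n$ bits, where each individual bit now carries its own $n$-bit \emph{address}. By \cref{prop:jpair}, a sequence over $[0,2^{2^n}-1]$ that is guaranteed to contain a bad $j$-pair must have length at least $2^{2^{2^n}}$, which is exactly the triply-exponential bound we need for the lookahead. So the language $L_n$ will again say: the input stream encodes a sequence $x_0,x_1,x_2,\ldots$ of numbers in $[0,2^{2^n}-1]$, the output stream encodes $y_0,y_1,y_2,\ldots$, and $x_1x_2\cdots$ contains a bad $y_0$-pair. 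The new twist compared to the deterministic case is that $y_0$ itself is now of exponential length, so it cannot be stored in polynomially many states; instead Player~$O$ is forced to \emph{repeat} $y_0$ forever (i.e.\ every $y_i = y_0$), and correctness of this copying is checked via markers, exactly as configuration-copying was checked in \cref{thm:detmullerComplete}.

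First I would fix the alphabets: the input letters carry bits, address bits, block/number delimiters, plus error markers; the output letters carry bits, address bits, the $\yes$ markers for the bad-pair endpoints, and its own error markers. Then I would describe the valid encodings: an input (and output) number is a sequence of $2^n$ \myquot{cells,} each cell being an $n$-bit address followed by a content bit, with the address counting $0,1,\ldots,2^n-1$. As in the previous proofs, the first rule is that whoever first violates the format loses (the automaton deterministically or nondeterministically detects this with $\bigo(n)$ states, going to $q_{\mathit{acc}}$ or $q_{\mathit{rej}}$). Next I would handle \emph{address-update errors} within a number exactly as in \cref{thm:detmullerComplete}: a marked address bit flips iff all less significant bits are $1$, checkable with $\bigo(n)$ states. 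The substantive new ingredients are: (i) checking $x_j = x_{j'} = y_0$ and $x_i + y_i = y_0$ for $j<i<j'$ when the numbers are exponentially long — this needs, for each pair of numbers to be compared, a nondeterministic \myquot{pointer} that guesses an address, remembers the $n$ address bits and the corresponding content bits of both numbers, and then lets the Emerson-Lei formula insist the comparison holds at the guessed address; quantifying over \emph{all} addresses is forced by the formula demanding the guessed-address bits match whatever they are; and (ii) checking that Player~$O$ faithfully copies $y_0$ into every $y_i$ — since Player~$I$ is behind on lookahead he cannot pre-mark the faulty bit, but a copy error manifests in one bit, and Player~$I$ marks its address later; the automaton then nondeterministically guesses, for each address block, whether it equals the marked one (hence must be content-checked) or not. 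It is this nondeterministic per-address guessing, together with the Emerson-Lei formula collapsing the exponentially many address-value constraints into a linear-size conjunction, that produces the jump from doubly- to triply-exponential. I would bound the state count: a constant number of gadgets, each of size $\bigo(n)$ or $\bigo(n^2)$ for the address-comparison pointers, with the product and the product-adjustment of the formula costing another polynomial factor, giving $\bigo(n^3)$ overall, and a formula $\varphi_n$ of the shape $\neg q_{\mathit{rej}} \wedge (q_{\mathit{acc}} \vee \psi_n)$ with $\psi_n$ a size-$\bigo(n^3)$ conjunction over address-bit positions, bit values, and the product components.

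For the two game-theoretic claims I would argue as in \cref{thm:detmullerLowerboundLookahead}. For the positive direction, pick $f$ constant with $f(0)$ large enough that Player~$O$ sees $x_1,\ldots,x_{2^{2^{2^n}}}$ before committing $y_0$; by \cref{prop:jpair} this window contains a bad $j$-pair, so she outputs $y_0 = j$, repeats it forever, marks the pair with $\yes$, arranges $x_i + y_i = y_0$ strictly between the marks, and introduces no copy or address errors herself; since Player~$I$ either cheats (Player~$O$ has the lookahead to mark the faulty address) or plays faithfully, $\varphi_n$ is satisfied in all cases. For the negative direction, given $g$ with $\sum_{i=0}^{(n+2)\cdot 2^n} g(i) \le 2^{2^{2^n}}$, Player~$I$ plays the length-$(2^{2^{2^n}}-1)$ bad-pair-free word from \cref{prop:jpair} as $x_0 x_1\cdots$; the bound on $g$ guarantees that once Player~$O$ has finished committing $y_0$ (which takes $(n+2)\cdot 2^n$ many rounds for the $2^n$ cells plus a delimiter), Player~$I$ has not yet started the $2^{2^{2^n}}$-th number, so he reads $y_0$, picks a number $x \ne y_0$, and outputs $x$ for every subsequent $x_i$, never introducing an error. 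Then no $q_A$/$q_R$ is relevant, the accepting sink is unreachable, and the implication forcing $x_j = x_{j'} = y_0$ together with $x_i+y_i=y_0$ in between cannot be met since the input has no bad $y_0$-pair; so $\varphi_n$ fails and Player~$I$ wins. The main obstacle is ingredient (ii): making the nondeterministic \myquot{is this the marked address?} guess sound — one must ensure Player~$O$ cannot exploit the nondeterminism to dodge a genuine copy error, which is handled by the standard trick that the guess is resolved in Player~$I$'s favor (it suffices that \emph{some} run witnesses the error for the automaton to accept), and symmetrically that a false accusation by Player~$I$ leaves a run that rejects; getting this asymmetry to line up correctly with which player \myquot{owns} the error, across both the address-update and content-copy checks, is the delicate bookkeeping.
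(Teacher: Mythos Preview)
Your overall architecture is right and matches the paper: encode numbers in $[0,2^{2^n}-1]$ as superblocks of $2^n$ addressed bits, force Player~$O$ to repeat $y_0$ forever, and let Player~$I$ mark an address to catch a copy error (your ingredient (ii) is exactly what the paper does). The game-theoretic arguments at the end are also fine.

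The gap is in ingredient (i). You carry over the addition trick $x_i + y_i = y_0$ from the deterministic proof, but you have simultaneously forced $y_i = y_0$ for every $i$. Substituting, $x_i + y_0 = y_0$ says $x_i = 0$, not $x_i \le y_0$, so your in-between check is vacuous in the wrong way. You cannot have Player~$O$ play the addend $y_0 - x_i$ between the marks either, because then she is no longer copying $y_0$ there and Player~$I$'s copy-error marker would (correctly) fire. And your fallback, a ``nondeterministic pointer that guesses an address'' to compare $x_j$ with $y_0$, cannot be made to quantify over \emph{all} $2^n$ addresses via the formula: a single run fixes one guessed address, and the Emerson-Lei formula only sees which states were visited, so it cannot force exponentially many distinct address values to have been checked.

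The paper's fix is to realize that once copying is enforced you do not need addition or pointer tricks at all: $y_j$ sits \emph{aligned} below $x_j$, so checking $x_j = y_0$ is just checking $x_j = y_j$ bit-by-bit, and checking $x_i < y_0$ is checking $x_i < y_i$ bit-by-bit; both are doable with a constant number of states by scanning the two superblocks in lockstep and looking for the first differing special bit. Consequently the paper does not have Player~$O$ mark the bad-pair endpoints; instead the automaton nondeterministically guesses the position $j$ where the bad pair starts (and also guesses superblock boundaries, verifiable since the first address is zero), then deterministically runs the aligned comparison until it sees $x_{j'} = y_{j'}$. This is the missing idea you need for (i).
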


\begin{proof}
  Pick some $n \in \nats_{\geq 1}.$
  Our goal is to give an~\formula-\wNMA~$\aut_n$ of  polynomial size in $n$ such that Player~$O$ needs triply-exponential lookahead in $n$ to win the corresponding delay game.
  Recall \cref{ex:jpair,prop:jpair}.
  We are constructing a bad $j$-pair automaton over numbers ranging between $0$ and $2^{2^{n}}-1$ which enforces the need of a word composed of at least $2^{2^{2^n}}$ numbers to be guaranteed the existence of a bad $j$-pair.

  To build an automaton of polynomial size in $n$, we encode a number \mbox{$x\in R = [0,2^{2^{n}}-1]$} as a word over $\bool$ as follows:
  To represent a number $x \in R$ in binary one needs $2^n$ bits, we address each bit by an $n$-bit sequence.

  We go into more detail.
  A sequence of the form $a_0 b_0 \cdots a_{2^n-1} b_{2^n-1}$ such that $a_i \in \bool^n$ and $b_i \in \bool$ for $0 \leq i \leq 2^n - 1$ is called a \emph{superblock}.
  A block $a_i \in \bool^n$ is interpreted as an \emph{address block} whose least significant bit is the rightmost one.
  The bit sequence $b_0\cdots b_{2^n - 1}$ is interpreted as a number $x \in R$, the least significant bit is the rightmost one.
  We refer to these bits as \emph{special bits}.

  In order to explain the idea, assume that in a play, Player~$I$ plays an input sequence that encodes a sequence of numbers $x_0,x_1,\ldots \in R$.
  We give a construction such that Player~$O$'s moves must produce an output sequence that encodes a sequence of numbers $y_0,y_1,\ldots \in R$ such that $x_1x_2\cdots$ contains a bad $y_0$-pair and it holds that $y_0 = y_1 = y_2 = \cdots$.
  In words, we enforce that Player~$O$ has to copy her first number ad infinitum.
  The technique to enforce Player~$O$ to copy her first number ad infinitum is as follows:
  When Player~$I$ realizes that she has not faithfully copied her number, then there exists an address $a$ such that there are two superblocks whose values of the special bits addressed by $a$ differ.
  Player~$I$ marks an address block that encodes $a$, and the winning condition (i.e., the automaton) enforces that all special bits which are addressed by $a$ must have the same value.
  This is then violated, so Player~$O$ loses if she does not copy her choice of $y_0$ ad infinitum.

  Furthermore, the automaton we construct verifies that the input sequence indeed contains a bad $y_0$-pair.
  Since it is enforced that $y_0 = y_1 = y_2 = \cdots$, verifying whether a bad $y_0$-pair exists is doable with a small enough automaton which we explain further below.
  
  We go into details.
  The automaton we construct uses the product alphabet~$\set{0,1,\yes,\#} \times \set{0,1,\#,\ok,\nok}$.
  We first describe what we call valid input resp.\ output format:
  \begin{itemize}
    \item An input sequence $\alpha$ has a valid format if it is of the form
    \[
      \left((\yes + \#)(0+1)^{n+1}\right)^\omega,
      \]
      where $\yes$ occurs exactly once.
    \item An output sequence $\beta$ has a valid format if it is of the form
    \[
      \big(\#(\ok + \nok)^{n}(0+1)\big)^\omega.
      \]
      The intention is that $\ok$ resp.\ $\nok$ are used to mark the absence resp.\ the presence of errors in updating the address blocks.
  \end{itemize}

  We now describe how the automaton $\aut_n$ behaves, then how to achieve this behavior.
  \begin{enumerate}
    \item\label[item]{item:k1} If there exists some $i$ such that $\alpha(0)\alpha(1)\cdots\alpha(i)$ cannot be completed to have a valid format, but $\beta(0)\beta(1)\cdots\beta(i-1)$ can be completed to have a valid format, the automaton accepts.
    \item\label[item]{item:k2} If there exists some $i$ such that $\beta(0)\beta(1)\cdots\beta(i)$ cannot be completed to have a valid format, but $\alpha(0)\alpha(1)\cdots\alpha(i)$ can be completed to have a valid format, the automaton rejects.
    
    \item\label[item]{item:k3} If the very first address is not zero, it accepts.
    \item\label[item]{item:k4} If the $k$-th bit of an address block is marked with $\nok$ (and no earlier $\nok$ occurs), then it is verified if the $k$-th bit of the next address block is correctly updated. If the update is faulty, the automaton accepts, if the update is correct, the automaton rejects.
    \item\label[item]{item:k5} If $\yes$ marks an address $a$, then the special bits picked by Player~$O$ that are addressed with $a$ either have all value zero or all have value one.
    If this condition is violated, the automaton rejects.
    \item\label[item]{item:k6} If none of the conditions before have lead to acceptance or rejection, the automaton verifies whether $x_1x_2\cdots $ contains a bad $y_0$-pair.
    If the answer is yes, it accepts, otherwise it rejects.
  \end{enumerate}

  We now describe how to achieve the desired behavior.
  Note that \cref{item:k1,item:k2,item:k3} can be easily checked in parallel via the automaton structure of an automaton with $\bigo(n)$ states.
  We use designated sink states $q_{\mathit{acc}}$ resp.\ $q_{\mathit{rej}}$ that are reached when the automaton accepts resp.\ rejects.

  Note that \cref{item:k4} can be handled as \cref{item:j6} introduced in the proof of \cref{thm:detmullerComplete} where we re-use the sink states $q_{\mathit{acc}},q_{\mathit{rej}}$.
  Again, $\bigo(n)$ states are used.

  It is left to handle \cref{item:k5,item:k6}.
  Regarding \cref{item:k5}, we need one gadget that stores the $n$-bit sequence that has been marked by $\yes$.
  We refer to this sequence as address $a$.
  Such a gadget uses $\bigo(n)$ states and was first introduced (and used) in \cref{ex:gadget,ex:gadgetdelay}.
  We index its states with $A$ for address.

  The automaton needs to verify that all special bits that are addressed by $a$ have the same value.
  Therefore, the automaton behaves as follows:
  \begin{itemize}
    \item Before a new address block begins, the automaton guesses whether this address will be equal to $a$.
    \item If the automaton guesses that the addresses are equal:
    \begin{itemize}
      \item It must be verified that the guess  is  right.
      If the guess  is  wrong, the automaton rejects.
      
      \item In order to do the verification we need one additional gadget that stores an $n$-bit sequence such as in \cref{ex:gadget}.
      We index its states by $A_{=}$ for equal address.
      The automaton enters this gadget every time it guesses that the address will be equal to $a$.
      Further below, we give a formula that is satisfied if, and only if, the guess  is  always right. 
      
      \item Additionally, the automaton must store the values (as a set) of the special bits picked by Player~$O$ that occur as the next letter.
    \end{itemize}
    \item If the automaton guesses that the addresses are not equal:
    \begin{itemize}
      \item It must be verified that the guess  is  right.
      If the guess  is  wrong, the automaton rejects.
      \item To verify, guess some $i \in [0,n-1]$ and check that the value of the $i$-th address bit is different from the $i$-th bit of the address $a$ (recall, which is marked by $\yes$, and stored in gadget $A$).
      \item In \cref{fig:gadget-guess} we give a gadget that implements the guessing and storing of a bit.
      We index its states by $G$ for guessing.
      The automaton enters this gadget every time it guesses that the address will be not equal to $a$.
      \item Further below, we give a formula that is satisfied if, and only if, always a bit could be found such that the values at this bit in the inspected address and in address $a$ differ.
    \end{itemize}
  \end{itemize}
  Such an automaton needs $\bigo(n)$ states.

%%%
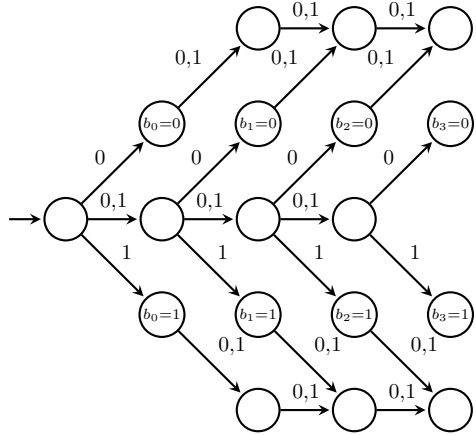
\begin{figure}[t]
  \hspace*{\fill}
  \begin{tikzpicture}[thick]
    \tikzstyle{every state}+=[inner sep=1pt, minimum size=2em, node distance=4.5em];

    \tikzstyle{every node}+=[scale=0.8]

    \node[state, initial] (s) {};

    \node[state, right of = s] (s0) {};
    \node[state, right of = s0] (s1) {};
    \node[state, right of = s1] (s2) {};

    \node[state, right of = s, yshift = 4.5em] (00) {\nodelbl{0}{0}};
    \node[state, right of = 00] (10) {\nodelbl{1}{0}};
    \node[state, right of = 10] (20) {\nodelbl{2}{0}};
    \node[state, right of = 20] (30) {\nodelbl{3}{0}};

    \node[state, above of = 10] (a1) {};
    \node[state, above of = 20] (a2) {};
    \node[state, above of = 30] (a3) {};

    \node[state, right of = s, yshift = -4.5em] (01) {\nodelbl{0}{1}};
    \node[state, right of = 01] (11) {\nodelbl{1}{1}};
    \node[state, right of = 11] (21) {\nodelbl{2}{1}};
    \node[state, right of = 21] (31) {\nodelbl{3}{1}};

    \node[state, below of = 11] (b1) {};
    \node[state, below of = 21] (b2) {};
    \node[state, below of = 31] (b3) {};

    \draw[->] (s) edge[auto] node {0} (00);
    \draw[->] (00) edge[auto] node {0,1} (a1);
    \draw[->] (10) edge[auto] node {0,1} (a2);
    \draw[->] (20) edge[auto] node {0,1} (a3);

    \draw[->] (s) edge[auto] node {1} (01);
    \draw[->] (01) edge[auto] node {0,1} (b1);
    \draw[->] (11) edge[auto] node {0,1} (b2);
    \draw[->] (21) edge[auto] node {0,1} (b3);

    \draw[->] (a1) edge[auto] node {0,1} (a2);
    \draw[->] (a2) edge[auto] node {0,1} (a3);

    \draw[->] (b1) edge[auto] node {0,1} (b2);
    \draw[->] (b2) edge[auto] node {0,1} (b3);

    \draw[->] (s0) edge[auto] node {0,1} (s1);
    \draw[->] (s1) edge[auto] node {0,1} (s2);

    \draw[->] (s) edge[auto] node {0,1} (s0);

    \draw[->] (s0) edge[auto] node {0} (10);
    \draw[->] (s1) edge[auto] node {0} (20);
    \draw[->] (s2) edge[auto] node {0} (30);

    \draw[->] (s0) edge[auto] node {1} (11);
    \draw[->] (s1) edge[auto] node {1} (21);
    \draw[->] (s2) edge[auto] node {1} (31);

  \end{tikzpicture}
  \hspace*{\fill}
  \caption{A gadget to (non-deterministically) store a single bit from a $4$-bit sequence.
  The top and bottom row of states serve the same purpose (namely, a bit has been stored) so they could be merged, but it is easier to draw the gadget in this way.
  This gadget is easily generalized to an $n$-bit sequence.}
  \label{fig:gadget-guess}
\end{figure}
%%%

Lastly, regarding \cref{item:k6}, the automaton guesses before the beginning of some superblock that its encoded number $x_j$ is such that $x_j = y_0$. 
Whether a position is the beginning of a superblock can be guessed and verified by the automaton, as all its special bits are zero.
Since \cref{item:k5} ensures that $y_j = y_0$ and $y_j$ occurs below $x_j$ it is easy to verify that the guess  is  correct (if the guess is wrong, the automaton rejects).
Then, for the following numbers $x_{j+1}, x_{j+2},\ldots$ it has to verify that $x_{j+1} < y_{j+1}, x_{j+2} < y_{j+2}, \ldots$ until some $x_{j'}$ is seen such that $x_{j'} = y_{j'}$.
To verify that some number $x$ is smaller than $y$, first, recall that $x$ and $y$ are encoded by their sequence of special bits such that the least significant bit is the rightmost one.
If $x < y$, there exists a special bit at some position $i$ such that all previous special bits had the same value for $x$ and $y$, the special bit at position $i$ has value zero in $x$ and value one in $y$.
The values of the special bits to the right play no role.
If the verification fails, the automaton rejects by going to some new sink $q_{\mathit
{verif}-\mathit{fail}}$.
If some $x_{j'} = y_{j'}$ has been seen, the automaton goes into a new accepting sink $q_{\mathit{verif}-\mathit{ok}}$.
Such an automaton needs a constant number of states, as the positions of special bits can be easily identified in a valid output sequence.

Recall that we construct a non-deterministic automaton, hence, the desired automaton~$\aut_n$ is given as 
\[
  (\aut_{\mathit{format}} \times \aut_{\mathit{address}}) \cup (\aut_{\mathit{format}} \times \aut_{\mathit{copy}} \times \aut_{\mathit{pair}})
\]
where $\aut_{\mathit{format}}$ handles \cref{item:k1,item:k2,item:k3}, $\aut_{\mathit{address}}$ handles \cref{item:k4}, $\aut_{\mathit{copy}}$ handles \cref{item:k5}, and $\aut_{\mathit{pair}}$ handles \cref{item:k6}.
The automaton has $\bigo(n^3)$ states.

The acceptance condition is given by the formula $\varphi_n$ defined as 
\[
 \neg q_{\mathit{rej}} \wedge \neg q_{\mathit
 {verif}-\mathit{fail}} \wedge \left( q_{\mathit{acc}} \vee (q_{\mathit
 {verif}-\mathit{ok}} \wedge \varphi_{\mathit{copy}})\right),
\]
where $\varphi_{\mathit{copy}}$ is defined as
\[
  \begin{array}{ll}
   & \bigwedge_{k \in [0,n-1]}  \bigwedge_{b \in \bool}  \addequal{k}{b} \rightarrow \addi{k}{b}\\
   \wedge & \bigwedge_{k \in [0,n-1]} \bigwedge_{b \in \bool} \addi{k}{b} \rightarrow \neg \addguess{k}{b}\\
   \wedge & \varphi_{\mathit{same}-\mathit{value}}.
  \end{array}
\]
The first line of $\varphi_{\mathit{copy}}$ is satisfied if, and only if, the automaton whenever it has guessed \myquot{current address equals marked address} is right.
If a such guess is wrong there is some $k \in [0,n-1]$ and $b \in \bool$ such that $\addi{k}{b}$ and $\addequal{k}{1-b}$ have been visited, falsifying the formula.

The second line of $\varphi_{\mathit{copy}}$ is satisfied if, and only if, the automaton whenever it has guessed \myquot{current address is not equal to marked address} is right.
Since the current address and the marked address differ in the value of at least one bit, the automaton can always guess to store this different bit in the gadget $G$.
Hence, the second line of the formula is satisfiable in this way.

The formula $\varphi_{\mathit{same}-\mathit{value}}$ expresses that the relevant special bits either all have value zero, or all have value one.

Note that the formula $\varphi_n$ needs to be adjusted to talk about the (product-like) structure of $\aut_n$ (which can be done as explained in the proof of \cref{thm:detmullerComplete}), hence, its final size is~$\bigo(n^3)$.

It is left to prove that 
\begin{itemize}
\item Player~$O$ wins $\delaygame{L(\aut_n)}$ for some constant delay function $f$, but
\item Player~$I$ wins $\delaygamep{L(\aut_n)}$ for every delay function $g$ with $\sum_{i = 0}^{(n+2)\cdot {2^n}} g(i) \leq 2^{2^{2^n}}$.
\end{itemize}

We start by giving a constant delay function $f$ such that Player~$O$ has a winning strategy in $\delaygame{L(\aut_n)}$.
According to \cref{prop:jpair}, the sequence~$x_1\cdots x_{2^{2^{2^n}}}$ of numbers is guaranteed to contain a bad $j$-pair for some $j \in [0,2^{2^n}~-~1]$.
Hence, we let $f(0) = m$, where $m$ is some number large enough to ensure that $x_1,\ldots,x_{2^{2^{2^n}}}$ is included in the lookahead (assuming that Player~$I$ plays valid superblocks with either $\#$ or $\yes$ (used once) in front of an address block).
We call playing valid superblocks with either $\#$ or $\yes$ in front of an address block playing valid for short.

If Player~$I$ plays valid, Player~$O$'s strategy is to output some $y_0$ such that $x_1\cdots x_{2^{2^{2^n}}}$ contains a bad $y_0$-pair.
The number $y_0$ is repeated ad infinitum.

If Player~$I$ does not play valid, Player~$I$ either violates the format, or he introduces an error in updating the address counter.
In the former case, $q_{\mathit{acc}}$ is reached and
the formula~$\varphi_n$ is trivially satisfied.
In the latter case, Player~$O$ has enough lookahead to realize this and mark the error. 
Then, also $q_{\mathit{acc}}$ is  reached.
The state $q_{\mathit{rec}}$ is never reached as she does not violate the desired format.

The sub-formula $\varphi_{\mathit{copy}}$ is also satisfied as she faithfully copies her first number $y_0$ ad infinitum.
It is left to explain why $q_{\mathit
{verif}-\mathit{fail}}$ is not reached and $q_{\mathit
{verif}-\mathit{ok}}$ is reached.
Since the sequence $x_1x_2\cdots$ contains a bad $y_0$-pair, it possible for the automaton to pick a run that correctly verifies this, hence, only $q_{\mathit
{verif}-\mathit{ok}}$ is reached.
Thus, we have shown that the formula $\varphi_n$ is satisfied when Player~$O$ plays according to her strategy.

We turn to the other direction.
Let $g$ be a delay function with $\sum_{i = 0}^{(n+2)\cdot {2^n}} g(i) \leq 2^{2^{2^n}}$.
We show that Player~$I$ wins $\delaygamep{L(\aut_n)}$.

According to \cref{prop:jpair}, there exists a word $w$ over $[0,2^{2^{n}}-1]$ with $|w|=2^{2^{2^n}}-1$ that contains no bad $j$-pair for every $j \in [0,2^{2^{n}}-1]$.
Player~$I$'s strategy is to play a valid input sequence such that $x_i$ is the bit sequence that encodes~$w(i)$ for all $0 \leq i \leq 2^{2^{2^n}}-1$. 

Recall, a superblock is of length $(n+2) \cdot {2^n}$ as we have $2^n$ address blocks of length $n$, $2^n$ special bits, and, additionally, $2^n$ address \myquot{markers} ($\#$ signifying unmarked, $\yes$ signifying marked).
After round $(n+2)\cdot {2^n}$, Player~$O$ has finished $y_0$.
Furthermore, the condition $\sum_{i = 0}^{(n+2)\cdot {2^n}} g(i) \leq 2^{2^n}$ ensures that Player~$I$ has not yet begun to spell the superblock that encodes $x_{2^{2^{2^n}}}$.
Hence, since $y_0$ is known to Player~$I$, he can pick a number $x \neq y_0$ and continue to produce a valid input sequence such that every $x_i$ encodes $x$ for all $i \geq 2^{2^n}$.

We show that Player~$I$ wins playing as described.
Player~$O$'s goal is to satisfy the formula~$\varphi_n$.
Clearly, she loses if her moves have the wrong format (as $q_{\mathit{rej}}$ would be reached), so we assume that she produces an output sequence that has the right format.
The state $q_{\mathit{acc}}$ is not reached, as it is only reached if the input sequence is invalid, hence, $\neg q_{\mathit
{verif}-\mathit{fail}} \wedge \left( q_{\mathit
{verif}-\mathit{ok}} \wedge \varphi_{\mathit{copy}}\right)$ must be satisfied.
We can assume that Player~$O$ faithfully copies $y_0$ to satisfy $\varphi_{\mathit{copy}}$.
But, since the input sequence does not contain a bad $y_0$-pair, no matter which run $\aut_n$ takes, the verification of a bad $y_0$-pair is never successful.
Hence, $q_{\mathit
{verif}-\mathit{fail}}$ is reached, falsifying $\varphi_n$.
Player~$I$ wins.
\end{proof}

\paragraph{Complexity.}

Finally, we turn to complexity showing a \threeexp lower bound on solving delay games with \formula-\wNMA winning conditions.
To this end, we generalize the analogous construction for \formula-\wDMA presented in the proof of \cref{thm:detmullerComplete}: 
Here, the players simulate a doubly-exponential space Turing machine.
Thus, the use of non-deterministic automata for the winning condition allows us to add an additional exponential blowup in the space consumption of the Turing machine.
Again, this requires some additional mechanisms, e.g., both players have to check their opponent's moves for correctness.
We describe these changes in comparison to the construction for deterministic automata below.

As before, Player~$I$ is in charge of producing the sequence of configurations (with repetitions to fill the lookahead) and of picking successors of universal configurations while Player~$O$ picks successors for existential configurations (by picking a transition for Player~$I$ to apply).
As configurations are now of doubly-exponential length, we address their cells by exponentially long addresses. 
Hence, the bits of these addresses have to be addressed using linearly-sized addresses as well, i.e., there are two levels of addresses.
Player~$O$ has again markers to force Player~$I$ to faithfully simulate the Turing machine while Player~$I$ also has markers to ensure that Player~$O$ correctly marks errors.
Altogether, we obtain a delay game that simulates an alternating doubly-exponential space Turing machine, and \atwoexpspace $=$ \threeexp yields the desired lower bound.

\begin{theorem}\label{thm:nondetmullerComplete}
  Solving delay games with \formula-\wNMA winning conditions is \threeexp-complete.
\end{theorem}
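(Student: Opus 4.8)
The upper bound is \cref{thm:upperbound}, so it remains to prove the \threeexp lower bound. The plan is to reduce from the word problem for alternating \emph{doubly}-exponential space Turing machines, using $\atwoexpspace = \threeexp$~\cite{ch1981st}. Given such a machine $\TM = (Q = Q_\exists \uplus Q_\forall, \Sigma, q_\initmark, \Delta, q_A, q_R)$, an input $w$, and a polynomial $p$ such that $2^{2^{p(|w|)}}$ bounds the space consumption, we set $n = p(|w|)$ and construct an \formula-\wNMA $\aut$ of polynomial size in $(|\Delta| + n)$ such that $\TM$ accepts $w$ if, and only if, Player~$O$ wins $\delaygame{L(\aut)}$ for some delay function $f$. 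As in \cref{thm:detmullerComplete}, the players build a run of $\TM$ on $w$ as a configuration sequence: Player~$I$ spells out all configurations with his moves (inserting \emph{copies}, flagged by a delimiter $C$ as opposed to $N$, to fill the lookahead while waiting for Player~$O$), Player~$I$ chooses successors of universal configurations, and Player~$O$ chooses, right after each existential configuration, the transition to be applied there. The one new feature is that a configuration now has $2^{2^n}$ cells, so a cell address is an exponentially long bit string; following the two-level addressing idea of \cref{thm:nondetmullerLowerboundLookahead}, each of those $2^n$ outer address bits carries an $n$-bit inner address, so the alphabet and the valid input/output formats are essentially those of \cref{thm:detmullerComplete} with one further level of addressing nested inside.

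As before, an error by Player~$I$ — a copy error, an update error, or an address-increment error (now at either address level) — manifests in a single cell, so once the relevant addresses and a three-cell window around the marked cell are stored, a polynomial-size automaton can check the claim. The crucial twist, exactly as in \cref{thm:nondetmullerLowerboundLookahead}, is that Player~$O$, being behind, cannot point directly at the outer address bit that Player~$I$ mishandled; instead she marks its $n$-bit inner address at a later position, and the automaton uses \emph{non-determinism} to guess, for each inner address block it reads, whether it equals the marked one (in which case the corresponding outer bit is checked for the copy/update error) or differs from it (witnessed by guessing a single differing position). These guesses are validated by an Emerson-Lei sub-formula of the shape used in \cref{thm:nondetmullerLowerboundLookahead}: one conjunct forces every ``equal'' guess to be correct, one forces every ``unequal'' guess to be correct, and the remaining conjuncts express the actual comparison at the marked cell. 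Symmetrically, Player~$I$ cannot point at the position where Player~$O$ failed to mark a genuine error, so Player~$I$ also gets markers and the automaton guesses whether Player~$O$'s marking is faithful. The overall $\aut$ is a product of a format-checking component, components handling the inner and outer address levels, a copy/update-checking component, and a component that detects $q_A$- and $q_R$-configurations, with $q_{\mathit{acc}}, q_{\mathit{rej}}, q_A, q_R$ turned into \emph{global} sinks (rather than product states); the acceptance condition then has the form $\varphi = \neg(q_{\mathit{rej}} \vee q_R) \wedge (q_{\mathit{acc}} \vee q_A \vee \varphi_{\mathit{error?}})$, with $\varphi_{\mathit{error?}}$ adjusted to the product structure as in \cref{thm:detmullerComplete}, all of polynomial size in $(|\Delta| + n)$.

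For correctness, if $\TM$ accepts $w$ one takes a constant $f$ giving Player~$O$ a full configuration of lookahead: she picks winning existential transitions and, having enough lookahead, correctly marks any error of Player~$I$'s (guessing the right inner addresses in the non-deterministic component), so the formula is satisfied whenever an error occurs or a $q_A$-configuration appears; Player~$I$'s only way to avoid a $q_A$-configuration is to copy forever, which leaves $\varphi_{\mathit{error?}}$ satisfied since Player~$O$ claims no error. If $\TM$ rejects $w$, Player~$I$ plays a genuine rejecting run — respecting Player~$O$'s existential picks and copying only while waiting for them — without introducing any error; then Player~$O$ can neither reach $q_A$ nor justify any claimed error (every non-deterministic run of $\aut$ that tries to witness an error fails), and eventually $q_R$ is reached, falsifying $\varphi$, so Player~$I$ wins $\delaygamep{L(\aut)}$ for every $g$.

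The main obstacle I expect is the bookkeeping of the two nested address levels: making the non-deterministic ``is this inner address the marked one?'' guessing interact correctly with the outer-address comparison and with the three-cell update window, while keeping both the product automaton and the Emerson-Lei formula polynomial; a secondary delicate point is ensuring that the transition Player~$O$ emits for an existential configuration — which is produced long before, immediately after that configuration — is provably the one that is actually applied to obtain the next new configuration.
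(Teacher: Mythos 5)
Your proposal follows essentially the same route as the paper: a reduction from alternating doubly-exponential space Turing machines via $\atwoexpspace = \threeexp$, with doubly-exponentially long configurations addressed by superblocks (two nested address levels), markers for both players to challenge each other's moves, non-deterministic \myquot{equal/unequal to the marked inner address} guessing validated by the Emerson-Lei formula, global sink states, and the same correctness argument in both directions. The two obstacles you flag at the end are exactly the points where the paper invests its technical effort (forcing Player~$O$ to copy the marked superblock's special bits so the matching cell in the successor configuration can be located, and storing the transition announced after a new existential configuration until it is applied), so no essential idea is missing.
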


\begin{proof}
  The upper bound was shown in \cref{thm:upperbound}, we show the corresponding lower bound.

  We give a reduction from the word problem for alternating doubly-exponential space Turing machines.
  Since $\text{\threeexp} = \text{\atwoexpspace}$ \cite{ch1981st}, we obtain the desired result.

  Let $\TM = (Q = Q_\exists \uplus Q_\forall,\Sigma,q_\initmark,\Delta,q_A,q_R)$ be an alternating doubly-exponential space Turing machine, where $\Delta \subseteq Q \times \Sigma \times \Sigma \times Q \times \set{-1,0,1}$.
  Furthermore, let $p$ be a polynomial such that $2^{2^p}$ bounds the space consumption of $\TM$ and let $w \in \Sigma^*$ be an input word.
  We fix $n = p(|w|)$.
  Wlog., we assume that the accepting state $q_A$ and the rejecting state $q_R$ are equipped with a self-loop.
  Furthermore, wlog., we assume that either $q_A$ or $q_R$ is reached in every run of $\TM$ on $w$.
  We have explained in the proof of \cref{thm:detmullerComplete} that such an assumption can be made.
  
  We construct an \formula-\wNMA $\aut$ of polynomial size in $(|\Delta| + n)$ such that $\TM$ accepts~$w$ if, and only if, Player~$O$ wins $\delaygame{L(\aut)}$ for some delay function~$f$.

  As in the proof of \cref{thm:detmullerComplete}, the idea is that, in the delay game, the players build a run of $\TM$ on $w$ in the form of a configuration sequence.
  Player~$I$ controls the universal states and Player~$O$ the existential ones.
  Furthermore, Player~$I$ spells all configurations with his moves.

  The configurations are of doubly-exponential size in $n$, so each cell of $\TM$'s tape is addressed by a superblock as introduced in the proof of \cref{thm:nondetmullerLowerboundLookahead}.

  In between configurations there is a delimiter, either $C$ or $N$, to denote if the next configuration is a \emph{copy} of the previous one or a \emph{new} one.
  The need to copy configurations arises since, in the delay game, Player~$O$ is behind with her moves, so Player~$I$ needs to wait for Player~$O$'s pick if the current configuration is existential.

  We go into details.
  The automaton we construct uses the product alphabet 
  \[\left(\set{0,1,C,N,\text{\textvisiblespace},\yes,\#} \cup Q \cup \Sigma\right) \times \left(\set{0,1,\#,\no,\noo,\ok,\nok} \cup \Delta\right).\]
  Recall that $Q,\Delta$, and $\Sigma$ are components of the Turing machine $\TM$ and are used to encode configurations and to pick transitions to apply.
  We first describe what we call valid input resp.\ output formats:
  \begin{itemize}
    \item An input sequence $\alpha$ is valid if it is of the form 
    \[
        \left((C+N)\big[S(\Sigma + Q + \text{\textvisiblespace})\big]^+\right)^\omega,
    \]
    where $S$ stands for superblock and is defined as
    \[
      S = \left((\yes + \#)(0 +  1)^{n+1}\right)^+.
    \]
    \begin{itemize}
      \item The sequence $\alpha$ is to be interpreted as a sequence of configurations 
      \[(C~+~N)c_0(C~+~N)c_1\cdots\]
      where each $c_i$ is a configuration delimited by $C$ or $N$.
      Note that the correct encoding of such configurations is enforced by the rules of the game.
      \item In a configuration encoding, a superblock defines the number of a cell, the letter after the superblock is either the tape content (that is, a letter $\sigma \in \Sigma$, or a blank (denoted as \textvisiblespace)) or a state $q \in Q$. 
      \item Recall that a superblock $S$ is to be interpreted as first a marker~$\yes$ or $\#$ to mark (the position before) an address block inside a superblock, then an $n$-bit address block encoding an address~$a$, followed by a special bit $b$.
      Then again, a marker, an address block, a special bit, a marker, and so on.
      \item The sequence of special bits of a superblock defines the number of the cell.
      \item The marker $\yes$ can be used once in the input sequence to mark (the position before) an address inside a superblock.
      The purpose is explained further below.
      \item The letter $\#$ is to be interpreted as unmarked.
    \end{itemize}    

    \item An output sequence $\beta$ is valid if it is of the form 
    \[
      \left(\Delta\big[ (\no + \noo + \#)(\ok + \nok)^{n} (0+1)\big]^+\right)^\omega,
    \]
    and if for every position $i$, $\alpha(i) \in \set{C,N}$ if, and only if, $\beta(i) \in \Delta$, i.e., the letters~$\tau \in \Delta$ occur at the same positions as the configuration delimiters $C$ and $N$.
    We give the intended interpretation of this format, which is detailed below.
    \begin{itemize}
      \item The markers $\ok$ resp.\ $\nok$ are intended to indicate the absence of resp.\ the existence of an error an address block.
      \item If $\nok$ marks a bit in an address block, it should indicate an error in updating the address block at this bit.
      \item If $\no$ marks (the position before) an address block, say the block encodes address~$a$, then it should indicate that there is an error in updating the special bit with address~$a$ in the next superblock.
      \item Finally, if $\noo$ marks (the position before) an address block, then it should indicate that there is a copy or update mistake in the successive configuration.
    \end{itemize}    
  \end{itemize}

The behavior of the automaton is as follows.
We first describe the occurrence of simple format errors that lead to acceptance resp.\ rejection:
\begin{enumerate}
  \item\label[item]{item:l1} If there exists some $i$ such that $\alpha(0)\alpha(1)\cdots\alpha(i)$ cannot be completed to have a valid format, but $\beta(0)\beta(1)\cdots\beta(i-1)$ can be completed to have a valid format, the automaton accepts.
  \item\label[item]{item:l2} If a configuration encoding $c$ contains not exactly one state or the first address block in a superblock is not zero, or the first superblock of a configuration does not encode the number zero, the automaton accepts.
  \item\label[item]{item:l3} If $c_0$ does not encode the initial configuration on $w$, it accepts.
  \item\label[item]{item:l4} If there exists some $i$ such that $\beta(0)\beta(1)\cdots\beta(i)$ cannot be completed to have a valid format, but $\alpha(0)\alpha(1)\cdots\alpha(i)$ can be completed to have a valid format, the automaton rejects.
\end{enumerate} 
\cref{item:l1,item:l2,item:l3,item:l4} can be handled similar to \cref{item:j1,item:j2,item:j3,item:j4} in the proof of \cref{thm:detmullerComplete}.
 One needs automata of polynomial size in $(|\Delta| + n)$ to do so.

 We now describe different kinds of (more involved) errors and how they are handled:
\begin{enumerate}
  \setcounter{enumi}{4}
  \item\label[item]{item:l5} If the $k$-th bit of an address block is marked with $\nok$, then it is verified if the $k$-th bit of the next address block is correctly updated. If the update is faulty, the automaton accepts, if the update is correct, the automaton rejects.
 \end{enumerate}
 This item can be handled as \cref{item:j6} in the proof of \cref{thm:detmullerComplete} with $\bigo(n)$ states.
 \begin{enumerate}
  \setcounter{enumi}{5}
  \item\label[item]{item:l6} If $\no$ marks an address $a$ in a superblock, then the automaton checks whether the special bit that is addressed by $a$ in the next superblock has been correctly updated.
  If an error has been correctly claimed, the automaton accepts, otherwise it rejects.
 \end{enumerate}
To achieve this behavior, a gadget as introduced in \cref{ex:gadget} is needed that stores the address marked by $\no$.
To determine how the value of the special bit addressed by the marked address changes it suffices to check whether all special bits (in the same superblock) to the right are one.
If at least one zero is seen, the value of the special bit does not flip.
Otherwise it does.
In the next superblock, the automaton has to find the special bit that has the same address.
It has to guess when it has advanced to this address block and verify that the guess is correct.
Therefore, we need another gadget as introduced in \cref{ex:gadget} to store the guessed bit and compare via a formula that the two stored addresses are equal.
The gadgets need $\bigo(n)$ many states, checking whether the special bit is correctly updated can be done with a constant number of states.
 \begin{enumerate}
  \setcounter{enumi}{6}
  \item\label[item]{item:l7} If the $\noo$ marker occurs in a superblock $S$, then the automaton checks whether there is a \emph{copy} or \emph{update error} between successive configurations that manifests in the cell whose number is encoded by $S$.
  If an error had been correctly claimed, the automaton accepts, otherwise it rejects.
 \end{enumerate}
 We have detailed in the proof of \cref{thm:detmullerComplete} (regarding \cref{item:j7}) how a copy or update error is checked when the relevant pieces of information are known (which requires $\bigo(|\Delta| + n)$ states).
 The difficulty lies in finding the same cell in the current and in the successor configuration.
 Therefore, the automaton guesses at the beginning of a superblock, say $S$, whether it has some address that is marked with $\noo$.
 (A wrong guess leads to rejection.)
 If the superblock $S$ contains a $\noo$, then the automaton enforces that, from then on, the output must copy the special bits of $S$ ad infinitum.
 This condition makes it easy to find the cell addressed by $S$ in the next configuration, because it suffices to find the cell addressed by a superblock such that the special bits in the input and output sequence are equal.

 We have described in detail in the proof of \cref{thm:nondetmullerLowerboundLookahead} (see \cref{item:k5}) how to enforce that a bit sequence is faithfully copied.
 We use the same technique here, which needs $\bigo(n)$ states.
 To recap, in case of an error, the input sequence marks an address of a special bit with $\yes$.
 This address is where a copy error has manifested.

 The last condition is concerned with whether $w$ is accepted by $\TM$:
 \begin{enumerate}
  \setcounter{enumi}{7}
   \item\label[item]{item:l8} If the format is valid and no error occurred so far: If a configuration with $q_A$ resp.\ $q_R$ is seen, the automaton accepts resp.\ rejects.
 \end{enumerate}
 This can be handled as \cref{item:j8} in the proof of \cref{thm:detmullerComplete} with a constant number of states.

All in all, the automaton $\aut$ needs a polynomial number of states (in $|\Delta| + n$).

 Finally, we are ready to describe the acceptance condition of $\aut$ as a formula $\varphi$ defined as 
\[
  \neg q_{\mathit{rej}} \wedge \neg q_{R} \wedge \left( q_{\mathit{acc}} \vee q_A \vee \varphi_{\no} \vee \varphi_{\noo}\right).
\]
 The formula $\varphi_{\no}$ is satisfied if the gadgets to store two addresses needed to find an update error regarding a special bit have not been entered (no special bit error is claimed) or the stored addresses are equal.
 
 The formula $\varphi_{\noo}$ is satisfied if the gadget to check faithful copying is either not entered or the gadget is entered and indicates that copying is successful.
 There are two reasons why the copy gadget is not entered.
 The marker $\noo$ is not used in the output, i.e., no configuration copy or update error is claimed.
 Or, $\noo$ is used, but the marker $\yes$ is not used in the input, i.e., no error in copying the special bits is claimed.

The size of $\varphi$ is polynomial in $(|\Delta| + n)$.

It is left to show that $\TM$ accepts $w$ if, and only if, Player~$O$ wins $\delaygame{L(\aut)}$ for some delay function~$f$.
To begin with, assume $\TM$ accepts $w$.
Let $f$ be a constant delay function such that $f(0) = m$, where $m$ is chosen such that Player~$O$ has enough lookahead to see a full configuration before she has to start with her output.
This means $m$ is triply-exponential in $n$.
We show that Player~$O$ wins $\delaygame{L(\aut)}$.
We assume that either player does not make simple errors (recall that means playing in the wrong format).
Firstly, assume that Player~$I$ either does not update an address block correctly, does not update a special bit in a superblock correctly, or introduces a configuration copy or update error.
The lookahead is large enough for Player~$O$ to correctly claim such an error.
The formula is then satisfied.
Secondly, assume that Player~$I$ does not introduce such errors.
Since $\TM$ accepts $w$, Player~$O$ can pick existential transitions such that if Player~$I$ applies the picked transitions an accepting configuration will be reached.
If an accepting configuration is seen, the formula~$\varphi$ is clearly satisfied.
Player~$I$ can prevent that an accepting configuration is seen by always copying the current configuration.
But then, the formula $\varphi$ is also satisfied, because the sub-formulas $\varphi_{\no}$ and $\varphi_{\noo}$ are satisfied if no error of any kind is claimed.
Player~$O$ does need to claim an error, as we assumed that Player~$I$ does not introduce an error.

For the other direction, assume $\TM$ rejects $w$.
We show that Player~$I$ wins $\delaygamep{L(\aut)}$ for every delay function~$g$.
Again, we assume that either player does not make simple format errors.
No matter the existential transitions that Player~$O$ chooses, Player~$I$ can always pick universal transitions such that eventually a rejecting configuration is reached.
Hence, we assume that Player~$I$ advances the run by giving new configurations according to his and Player~$O$'s choices.
A more detailed description what it means to advance a run is given in the proof of \cref{thm:detmullerComplete}. 

Furthermore, we assume that Player~$I$ does not introduce any kind of errors as this is not beneficial for him.
Player~$O$ loses a play where a rejecting configuration is reached.
Hence, her only chance is to claim an error. 
Since there are no errors introduced by Player~$I$, this approach is not fruitful as this consequently falsifies $\varphi$.
Player~$O$ cannot win a play, i.e., Player~$I$ wins $\delaygamep{L(\aut)}$.
\end{proof}

\section{Conclusion}
\label{sec:conclusion}

We took the first step towards investigating delay games with Muller conditions by showing doubly- and triply-exponential bounds for deterministic and non-deterministic delay games with weak Muller conditions.
More, specifically we have shown that solving such games is \twoexp-complete (for deterministic automata) and \threeexp-complete (for non-deterministic automata) and that doubly-exponential (for deterministic automata) and triply-exponential lookahead (for non-deterministic automata) is necessary and sufficient.
These results have to be compared to those for deterministic safety and parity automata, for which solving delay games is \exptime-complete and exponential lookahead is necessary and sufficient.
Similarly, for non-deterministic safety and parity automata solving delay games is \twoexp-complete and doubly-exponential lookahead is necessary and sufficient.
Thus, the succinctness of weak Muller conditions (encoded by formulas) yields an exponential increase in comparison to both safety and parity conditions.

There are two immediate directions for further research: Try to lift our results to (standard) Muller conditions and to strengthen the lower bounds by proving them for less succinct representations of $\curlyF$. 

Recall that our lower bounds for weak Muller conditions rely on gadgets comparing $n$-bit strings for some fixed $n$.
This is possible for a fixed number of comparisons by having one gadget for each string to be compared and then using the formula defining $\curlyF$ to implement the actual comparison.
A natural way to lift, say, the bad $j$-pair lower bound game is to just play this game infinitely often.
However, this requires to reuse the gadgets infinitely often, which then means one cannot compare strings from one iteration of the bad $j$-pair game, but compares strings from all iterations.

Let us stress again that all our upper bounds are independent of the representation of the acceptance condition. 
However, our lower bounds only hold for Emerson-Lei conditions, which is the most succinct representation.
Strengthening the lower bounds via less succinct representations most likely requires a new approach, as all lower bounds we have proven require the comparison of $n$-bit strings. 
This can be done with a small formula, as shown here, but not with the less succinct presentations considered by Hunter and Dawar for delay-free games~\cite{DBLP:conf/mfcs/HunterD05,HunterPhD}.
In particular, it is even open whether delay games with explicit Muller conditions (the least succinct representation) are harder than delay games with parity or safety conditions. 
Note that in the delay-free case, explicit Muller games can be solved in polynomial time~\cite{DBLP:conf/fsttcs/Horn08}.

%%%%%%%%%%%%%%%%%%%% Biblio %%%%%%%%%
%\bibliographystyle{ims-rv-van}
\bibliographystyle{plain}
\bibliography{biblio}

\end{document}